\documentclass{amsart}

\usepackage[table]{xcolor}
\usepackage{hyperref}
\usepackage{enumerate}
\usepackage{upgreek}
\usepackage{bm}
\usepackage{pstool}
\usepackage{url}
\usepackage{amssymb}
\usepackage{amsthm}
\usepackage{tikz}

\theoremstyle{definition}

\newtheorem{theorem}{Theorem}[section]
\newtheorem{definition}[theorem]{Definition}

\newtheorem{prop}[theorem]{Proposition}

\newtheorem{corollary}[theorem]{Corollary}

\newcommand{\realnumb}{\mathbb{R}}
\newcommand{\CA}[1]{\mathfrak{Cs}\, #1\,}
\newcommand{\ca}[1]{\mathsf{Conc}\,#1\,}
\newcommand{\defeq}{\ \stackrel{\textsf{\tiny def}}{=} \ }
\newcommand{\defiff}{\stackrel{\textsf{\tiny def}}{\iff}}
\newcommand{\ie}{i.e., }
\newcommand{\eg}{e.g., }

\newcommand{\la}{\langle}
\newcommand{\ra}{\rangle}
\newcommand{\Poi}{\mathsf{Poi}}
\newcommand{\Triv}{\mathsf{Triv}}
\newcommand{\Scal}{\mathsf{Scal}}
\newcommand{\defequiv}{\stackrel{\scriptscriptstyle{\Delta}}{\equiv}}
\newcommand{\euclg}{\mathcal{E}\!\mathit{ucl}}
\newcommand{\cng}[1]{\bm{{ \cong}}_{#1}}
\newcommand{\cngrel}[1]{\mathrel{\bm{{ \cong}}_{#1}}}

\newcommand{\concept}{\mathsf{C}}
\newcommand{\llcon}{\bm{\uplambda}}
\newcommand{\abstime}{{\mathsf{S}}}
\newcommand{\rest}{\mathsf{Rest}}
\newcommand{\restrel}{\mathrel{\rest}}
\newcommand{\abstimerel}{\mathrel{\abstime}}

\newcommand{\orrigo}{\vec{\textsf{o}}}
\newcommand{\egyseg}{\vec{\textsf{e}}}
\newcommand{\Nodel}{{\mathfrak{N}}}
\newcommand{\Model}{{\mathfrak{M}}}

\newcommand{\group}{\mathsf{G}}

\newcommand{\aut}[1]{\mathsf{Aut}\; #1}

\newcommand{\fv}[1]{{#1}}
\newcommand{\relst}{\mathcal{R}\mathit{el}}
\newcommand{\RelST}{\mathcal{R}\mathit{elST}}
\newcommand{\GalST}{\mathcal{G}\mathit{alST}}
\newcommand{\LclST}{\mathcal{LC}\mathit{lassST}}
\newcommand{\rel}{\mathsf{R}}

\newcommand{\geometry}{\mathcal{G}}
\newcommand{\Col}{{\mathsf{Col}}}
\newcommand{\Bw}{{\mathsf{Bw}}}

\newcommand{\sqEl}[1]{\|#1\,\|^{2}}
\newcommand{\sqMl}[1]{\|#1\,\|_{\mu}^{2}}

\renewcommand{\qed}{\hfill$\Box$}

\title[On {A}ndr\'eka's Conjecture]{On {A}ndr\'eka's Conjecture that special relativity is the only possible conceptual reduct of classical kinematics} 
\author{Judit Madar\'asz}
\address{Judit Madar\'asz, HUN-REN Alfr\'ed R\'enyi Institute of Mathematics, Budapest, Hungary}
\email{madarasz.judit@renyi.hu}

\author{Mike Stannett}
\address{Mike Stannett, School of Computer Science, The University of Sheffield, Sheffield, UK}
\email{m.stannett@sheffield.ac.uk}

\author{Gergely Sz\'ekely}
\address{Gergely Sz\'ekely, HUN-REN Alfr\'ed R\'enyi Institute of Mathematics, Budapest, Hungary  \& University of Public Service, Budapest, Hungary.}
\email{szekely.gergely@renyi.hu}

\keywords{special relativity; Galilean spacetime; concepts; automorphisms; algebraic logic}

\date{\today}

\thanks{This research is supported by the Hungarian National
  Research, Development and Innovation Office (NKFIH), grants
  no.\ FK-134732 and TKP2021-NVA-16.}

\begin{document}

\begin{abstract}
  In this paper, we prove a pure mathematical result which has
  important implications for the history and philosophy of classical
  physics and the conceptual origins of relativity theory. In formal
  terms, we show that, up to definitional equivalence, there is no
  intermediate model of spacetime lying strictly between special
  relativity and late classical kinematics. Informally, this means
  that there was essentially no other option but to switch to special
  relativity to resolve the conflict between late classical kinematics
  and the null result of the Michelson--Morley experiment.
\end{abstract}

\maketitle

\section{introduction}
In this paper, we prove a pure mathematical result which has important
implications for the history and philosophy of classical physics and
the conceptual origins of relativity theory.

The late 19th century was a period of transition. It had long been
believed that time was an absolute concept; and it had been known since
the 1700s that the speed of light was finite; but it was not until the
late 1860s, with the publication and refinement of Maxwell's
equations, that it was suggested that light could be interpreted in
terms of electromagnetic waves moving in all directions at lightspeed
through some ``luminiferous {\ae}ther''.  This gives rise to the idea
of \emph{lightlike-relatedness}, where we say that two spacetime
points are lightlike-related if it is feasible for a light signal to
travel from one to the other. This led for a while to a model of the
physical universe, which we refer to as \emph{late classical
spacetime} (or \emph{kinematics}).

Because light travels at the same speed in all directions within the
{\ae}ther, it made sense to think of the {\ae}ther as constituting an
absolute rest frame, and this fortuitously gave concrete meaning to Newton's, until then
``unphysical,'' concept of absolute space as something relative to 
which one can measure motion. This viewpoint was subsequently rendered untenable by the
Michelson--Morley experiment. So there appeared the need to find a
model of spacetime that gets rid of some classical concepts while keeping 
the notion of lightlike-relatedness.

Staying in harmony with the Galilean principle of relativity,
Einstein's solution, in some sense, was to keep \emph{only}
lightlike-relatedness and concepts definable from it.  But was this
minimalistic solution the only option left `on the table' or are there
other possible solutions -- spacetimes which, while not as loose as
relativity, are nonetheless not as constrained as late classical
spacetime itself? In another words, is there an intermediate theory
that lies conceptually between special relativity and late classical
kinematics?

This is the subject of Andr{\'e}ka's Conjecture. While discussing work
by Lefever and Sz{\'e}kely \cite{diss,ClassRelKin} -- which showed, in
an axiomatic framework, that one only needs to add a single concept,
absolute simultaneity (\ie absolute time), to special relativity to
get back to late classical kinematics up to definitional equivalence
-- Andr{\'e}ka conjectured in 2017 that this holds for \emph{any}
non-relativistic classical concept; in other words, there are no
intermediate spacetimes between relativistic and late classical, or at
least, not if we regard spacetime as a structure built on top of
4-dimensional real space, $\mathbb{R}^4$.

In other words, as Andr{\'e}ka conjectured and as we prove formally in
this paper, the need to retain lightlike-relatedness while accepting
the null outcome of the Michelson--Morley experiment, without
introducing entirely new concepts, meant that Einstein's relativistic
spacetime was the \emph{only} option left on the table.

Andr{\'e}ka originally formulated her conjecture in terms of Tarskian
algebraic logic, see Theorem~\ref{maximal}, but one does not need to be
familiar with algebraic logic to understand its content.  Here we
formulate and prove an equivalent version of the conjecture using
classical model theory and definability theory, see
Theorem~\ref{Hajnalconjecture}.

\subsection*{Notation and conventions}
In general, we use standard model theoretic and set theoretic
notations.  By a \emph{model}, we mean a first-order structure in the
sense of classical model theory. Given models $\Model$ and $\Nodel$,
we write $M$ and $N$, respectively, to denote their universes.

If $\rel\subseteq M^n$ is a relation over a nonempty set $M$ and
$f\colon M \to M$ is a map, we say that $\rel$ is \emph{closed under}
$f$ if{}f $(a_1,\dots, a_n)\in \rel \implies (f(a_1),\dots,
f(a_n))\in\rel$, and that $f$ \emph{respects} $\rel$ if{}f
$(a_1,\dots, a_n)\in \rel \iff (f(a_1),\dots, f(a_n))\in\rel$.  We
often write ``$\rel(a_1,\dots,a_n)$'' in place of ``$(a_1,\dots,a_n)
\in \rel$''.

We fix an enumeration $\fv{v}_1,\fv{v}_2,\fv{v}_3,\ldots$ of distinct
variables. If $(a_1,\ldots,a _n)\in M^n$ and $\varphi$ is a
first-order formula in the language of $\Model$, we write
$\Model\models\varphi[a_1,\ldots,a_n]$ to mean that $\varphi$ is
satisfied in $\Model$ by any evaluation
$e\colon\{\fv{v}_1,\fv{v}_2,\ldots\}\to M$ of variables for which
$e(\fv{v}_1)=a_1$, \ldots, $e(\fv{v}_n)=a_n$. The expression
$\varphi(\fv{v}_1,\ldots,\fv{v}_n)$ indicates that the free variables
of formula $\varphi$ come from the set $\{ \fv{v}_1,\ldots,\fv{v}_n
\}$.

We write $\aut{\Model}$ to denote the set of automorphisms of model
$\Model$, and note that, if the language of $\Model$ contains only
relation symbols, then a function $f\colon M\to M$ is an automorphism
of $\Model$ exactly if it is a bijection that respects all the
relations of $\Model$.

If $\Model$ is a model and $\rel$ is a relation on the
universe of $\Model$, then model $\la \Model,\rel\ra$ is the
expansion of $\Model$ with relation $\rel$ to some language that
contains exactly one extra relation symbol, whose interpretation in
$\Model$ is $\rel$.

We write $\realnumb$ for the set of real numbers, note that it forms a 
field when equipped with the usual operators and constants, and interpret 
$\realnumb^4$ as a vector space over $\realnumb$. We freely adopt standard 
notation for its associated functions (vector addition, multiplication of 
a vector by a scalar, etc.). We assume the reader is familiar with notions 
like \emph{linear transformation}.

We use both $A\subset B$ and $B\supset A$ to indicate that $A$ is a
proper subset of $B$.  The symbol $\Box$ indicates the end (or
absence) of a proof.

\subsection*{Concepts and definitional equivalence}

By a \emph{concept} of model $\Model$, we mean any relation definable
in $\Model$, where an $n$-ary relation $\rel$ on $M$ is \emph{definable} 
if{}f there is a first-order formula
$\varphi(\fv{v}_1,\ldots,\fv{v}_n)$ in the language of $\Model$ that
defines it; \ie for every $(a_1,\ldots,a_n)\in M^n$, we have
\begin{equation}\label{eq:def-def}
(a_1,\ldots,a_n) \in \rel \quad\Longleftrightarrow\quad
\Model\models\varphi[a_1,\ldots,a_n].
\end{equation}
We write $\ca{\Model}$ for the set of all concepts of $\Model$.

We say that models $\Model$ and $\Nodel$ are \emph{definitionally
equivalent} and write $\Model\defequiv\Nodel $ to mean they have the
same concepts, \ie $\ca{\Model}=\ca{\Nodel}$.\footnote{This
formulation of definitional equivalence is equivalent to the standard
formulation (\eg \cite[p.51]{HMT71} or \cite[p.453]{Monk2000}) but is
better suited to our requirements.}  Because the universe $M$ of
$\Model$ is definable in $\Model$ as a unary relation, we have
  \begin{equation*}
    \ca{\Model}\subseteq \ca{\Nodel} \implies M\subseteq N,
  \end{equation*}
  and hence
  \begin{equation*}
    {\Model}\defequiv {\Nodel} \implies M=N.
  \end{equation*}

\section{Relativistic and classical spacetimes}

We will define two models: \emph{relativistic spacetime} and
\emph{late classical spacetime}.  For simplicity, the points (or
events) of the spacetimes will be identified with $\realnumb^4$, and
if $(t,x,y,z)\in\realnumb^4$, we will call $t$ the time component and
$(x,y,z)$ the spatial component of $(t,x,y,z)$.  The binary relations
$\llcon$ of \emph{lightlike relatedness} and $\abstime$ of
\emph{absolute simultaneity} are defined on $\realnumb^4$ as follows:
\begin{eqnarray*}
\label{e-llcon}
(t,x,y,z)\,\llcon\, (t',x',y',z') & \defiff & (t-t')^2=(x-x')^2+(y-y')^2 +
(z-z')^2,\\
\label{e-abstime}
( t,x,y,z)\,\abstime\, (t',x',y',z') & \defiff & t=t'.
\end{eqnarray*} 

\begin{figure}[!htb]
  \begin{center}
    \begin{tikzpicture}[scale=1.7]

      \node[below] (R0) at (-2,4.2) {(relativistic spacetime)};
      \node[below] (N0) at (2,4.2) {(late classical spacetime)};

      \node[below] (R) at (-2,4.5) {$\RelST=\la \realnumb^4,\llcon\ra$};
      \node[below] (N) at (2,4.5) {$\LclST=\la \realnumb^4,\abstime,\llcon\ra$};

      \begin{scope}[shift={(-2,1.9)}]
        \draw[thick,<->, >=stealth] (0,1.9) to  (0,0) to (2.05,0) ;
        \draw [thick,->,>=stealth] (0,0) to (-1.04,-0.52);
        \draw[red,thick] (0.75,1.7) ellipse [x radius=0.5,y radius=0.05];
        \draw[red,thick] (0.25,1.7) to (1.25,0.7) (0.25,0.7) to (1.25,1.7);
        \draw[red,thick] (0.75,0.7) ellipse [x radius=0.5,y radius=0.05];
        \draw[red, ultra thick] (0.75,1.2) to node[black,below right=-2.5]  {$\llcon$} (1.1,1.55);
        \draw[fill] (0.75,1.2) circle [radius=0.03];
        \draw[fill] (1.1,1.55) circle [radius=0.03];
      \end{scope}

      \begin{scope}[shift={(2,1.9)}]
        \draw[thick, <->, >=stealth] (0,1.9) to  (0,0) to (2.05,0) ;
        \draw [thick, ->,>=stealth] (0,0) to (-1.04,-0.52);
        \draw[red,thick] (0.75,1.7) ellipse [x radius=0.5,y radius=0.05];
        \draw[red,thick] (0.25,1.7) to (1.25,0.7) (0.25,0.7) to (1.25,1.7);
        \draw[red,thick] (0.75,0.7) ellipse [x radius=0.5,y radius=0.05];
        \draw[fill,blue, opacity=0.5] (-1,-0.5) to (0,0) to (2,0) to (1,-0.5) to cycle;
        \draw[fill,blue, opacity=0.5] (-1,0.1) to (0,.6) to (2,0.6) to (1,0.1) to cycle;
        \draw[blue, ultra thick] (-0.2,-0.3) to node[black,below right=-.5]  {$\abstime$} (1.2,-.1);
        \draw[fill] (-0.2,-.3) circle [radius=0.03];
        \draw[fill] (1.2,-.1) circle [radius=0.03];
      \end{scope}

    \end{tikzpicture}
  \end{center}
  \caption{This figure illustrates relativistic spacetime $\RelST$ and late classical spacetime $\LclST$. \label{fig-st}}
\end{figure}

We define relativistic and late classical spacetime, respectively, to be:
\begin{eqnarray*}
\RelST & \defeq & \la \realnumb^4,\llcon\ra,\\
\LclST & \defeq & \la \realnumb^4,\abstime,\llcon\ra,
\end{eqnarray*}
see Figure~\ref{fig-st}. These definitions may appear oversimplified,
but in Section~\ref{sec-lcs}, we will show that $\LclST$ is
definitionally equivalent to Galilean spacetime extended with
lightlike relatedness $\llcon$, and a comparison of $\RelST$-style
and coordinate-based axiomatizations is investigated in
\cite{AN14}.

Since $\LclST$ is obtained by adding $\abstime$ to $\RelST$, it is obvious that $\ca{\RelST}\subseteq\ca{\LclST}$. Importantly for our purposes, however, the inclusion is strict:

\begin{prop}
\label{prop-tartalmazas}
$\abstime\not\in\ca{\RelST}$, and hence,
$\ca{\RelST}\subset\ca{\LclST}$.
\end{prop}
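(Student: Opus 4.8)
The plan is to use the standard fact that every concept of a model is respected by every automorphism of that model, and then simply to exhibit a single automorphism of $\RelST$ that fails to respect $\abstime$. First I would record the logical principle underlying the whole argument: if $\rel\in\ca{\Model}$ and $f\in\aut{\Model}$, then $f$ respects $\rel$. This holds because $\rel$ is defined by some first-order formula $\varphi$, and automorphisms preserve the satisfaction of first-order formulas, so $\Model\models\varphi[\bar a]\iff\Model\models\varphi[f(\bar a)]$. Specializing to $\RelST$: since its language contains only the relation symbol $\llcon$, its automorphisms are exactly the bijections $f\colon\realnumb^4\to\realnumb^4$ that respect $\llcon$. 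Hence, to conclude $\abstime\not\in\ca{\RelST}$, it suffices to produce one such bijection that respects $\llcon$ but does not respect $\abstime$. Note that this strategy needs only a single automorphism, so no classification of $\aut{\RelST}$ is required.

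Next I would take $f$ to be a Lorentz boost in the first spatial direction with some fixed speed $v$ satisfying $0<|v|<1$:
\[
f(t,x,y,z) = \bigl(\gamma(t-vx),\ \gamma(x-vt),\ y,\ z\bigr), \qquad \gamma = 1/\sqrt{1-v^2}.
\]
This is a linear bijection, and it preserves the Minkowski quadratic form $Q(t,x,y,z)=t^2-x^2-y^2-z^2$. Because $f$ is linear we have $f(p)-f(q)=f(p-q)$, so $Q(f(p)-f(q))=Q(p-q)$; since $p\,\llcon\,q$ holds exactly when $Q(p-q)=0$, this shows that $f$ respects $\llcon$, and therefore $f\in\aut{\RelST}$. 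The verification that $f$ preserves $Q$ is a routine computation and is the only place where any calculation is needed.

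Finally I would witness the failure of $f$ to respect $\abstime$ with a concrete pair of simultaneous points. Taking $p=(0,0,0,0)$ and $q=(0,1,0,0)$, we have $p\,\abstime\,q$, whereas $f(p)=(0,0,0,0)$ and $f(q)=(-\gamma v,\gamma,0,0)$ have distinct time components whenever $v\neq 0$, so $f(p)\,\not\abstime\,f(q)$. Thus $f$ does not respect $\abstime$, and by the principle above $\abstime$ cannot be a concept of $\RelST$. The strict inclusion $\ca{\RelST}\subset\ca{\LclST}$ then follows immediately from the already-noted inclusion $\ca{\RelST}\subseteq\ca{\LclST}$ together with $\abstime\in\ca{\LclST}$, the latter holding because $\abstime$ is (the interpretation of) a relation symbol of $\LclST$ and so is defined by an atomic formula.

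The only real content is the choice of automorphism: the underlying point is the relativity of simultaneity, namely that a nontrivial boost mixes the time coordinate with a spatial coordinate and therefore cannot preserve the ``same time'' relation, even though it preserves the light-cone structure. I expect no serious obstacle beyond checking that the boost preserves $Q$; everything else is immediate from the definitions.
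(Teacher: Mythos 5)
Your proof is correct and follows essentially the same route as the paper, which likewise establishes $\abstime\not\in\ca{\RelST}$ by exhibiting a Lorentz boost with nonzero speed as an automorphism of $\RelST$ that fails to respect $\abstime$. You merely spell out in full the details (the explicit boost, the preservation of the Minkowski form, and a concrete witnessing pair of points) that the paper leaves as routine.
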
  
\begin{proof}
To prove that $\abstime\not\in\ca{\RelST}$, it is enough to find an
automorphism of $\RelST$ that does not respect $\abstime$. Using the
convention that sets the speed $c$ of light to be 1, any Lorentz boost
with nonzero speed $v$ is such an automorphism.
\end{proof}

\section{Andr{\'e}ka's Conjecture}

Andr{\'e}ka's conjecture can now be stated; it concerns concepts that lie in the gap between $\ca{\RelST}$ and $\ca{\LclST}$.


\begin{theorem}[Andr\'eka's Conjecture]
\label{Hajnalconjecture}
For any concept $\concept\in\ca{\LclST}$ with 
$\concept\not\in\ca{\RelST}$,
\[
\la \RelST,\concept\ra  \defequiv  \LclST,
\]
\end{theorem}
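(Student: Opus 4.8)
The plan is to exploit the characterization of definitional equivalence in terms of concepts, reducing the theorem to a single definability statement and then attacking that statement through automorphism groups. First I would observe that the inclusion $\ca{\la\RelST,\concept\ra}\subseteq\ca{\LclST}$ is immediate: both $\llcon$ and $\concept$ belong to $\ca{\LclST}$, so every relation first-order definable from them lies in $\ca{\LclST}$. Since $\ca{\LclST}$ is exactly the class of relations first-order definable from $\{\llcon,\abstime\}$, and $\llcon$ is already available in $\la\RelST,\concept\ra$, the reverse inclusion $\ca{\LclST}\subseteq\ca{\la\RelST,\concept\ra}$ -- and hence the whole theorem -- reduces to the single membership $\abstime\in\ca{\la\RelST,\concept\ra}$. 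In short, it suffices to \emph{define absolute simultaneity from lightlike-relatedness together with the one extra classical concept} $\concept$.

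To produce such a definition I would pass to automorphism groups, using the chain $\aut{\LclST}\subseteq\aut{\la\RelST,\concept\ra}\subseteq\aut{\RelST}$; the right-hand inclusion is automatic, and the left-hand one holds because $\concept\in\ca{\LclST}$, so every automorphism of $\LclST$ respects $\concept$ as well as $\llcon$. Here $\aut{\RelST}$ is the group of $\llcon$-preserving bijections of $\realnumb^4$, which by an Alexandrov--Zeeman type theorem consists exactly of the Poincar\'e transformations composed with dilations, while $\aut{\LclST}$ is its subgroup of transformations that additionally preserve simultaneity -- spatial isometries, spatial and temporal translations, and common dilations, but crucially \emph{no} nontrivial boosts, since a boost alters simultaneity. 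The geometric core is then that $\aut{\LclST}$ is a \emph{maximal} subgroup of $\aut{\RelST}$: any element of $\aut{\RelST}\setminus\aut{\LclST}$ has a nontrivial boost component, and conjugating a single boost by the spatial rotations already present in $\aut{\LclST}$ produces boosts in every direction, which together generate the full Lorentz part and hence all of $\aut{\RelST}$.

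The decisive input that converts this group-theoretic maximality into the statement about concepts is a Galois-type principle: for every structure $\Model$ with $\ca{\RelST}\subseteq\ca{\Model}\subseteq\ca{\LclST}$ and every $\rel\in\ca{\LclST}$, one has $\rel\in\ca{\Model}$ if{}f $\rel$ is invariant under $\aut{\Model}$. Granting this, the argument closes quickly. Applied to $\Model=\RelST$, the hypothesis $\concept\notin\ca{\RelST}$ shows $\concept$ is not $\aut{\RelST}$-invariant, so $\aut{\la\RelST,\concept\ra}\neq\aut{\RelST}$; maximality then forces $\aut{\la\RelST,\concept\ra}=\aut{\LclST}$. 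Since $\abstime\in\ca{\LclST}$ is invariant under $\aut{\LclST}=\aut{\la\RelST,\concept\ra}$, a second application of the principle, now to $\Model=\la\RelST,\concept\ra$, yields $\abstime\in\ca{\la\RelST,\concept\ra}$, completing the reduction.

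I expect the main obstacle to be establishing this Galois-type principle rather than the maximality, which is a clean conjugation argument once $\aut{\RelST}$ has been identified. The principle is delicate because it cannot hold for arbitrary relations: the ordered real field is definable in these structures, so there exist $\aut{\RelST}$-invariant relations that are not definable -- for instance one pinning a dilation- and Poincar\'e-invariant ratio of Minkowski intervals to a non-definable constant such as $\pi$. The point is that such pathological relations do not lie in $\ca{\LclST}$, and the principle need only be proved \emph{relative to} the interval $[\ca{\RelST},\ca{\LclST}]$ of tame, semialgebraic concepts. Making this precise -- presumably via an explicit coordinatization of $\realnumb^4$ from $\llcon$, a normal form for $\LclST$-definable relations, and a Svenonius/Beth-style argument adapted to this single-model semialgebraic setting -- is where the real work lies.
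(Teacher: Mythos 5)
Your proposal is correct and follows the paper's overall architecture --- a Galois-type correspondence between concepts and automorphism groups valid only for ``tame'' (field-definable) relations, the identifications $\aut{\RelST}=\Scal\circ\Poi$ (Alexandrov--Zeeman) and $\aut{\LclST}=\Scal\circ\Triv$, and the maximality of the latter group in the former --- but it diverges from the paper in two genuine ways. First, your closing reduces the theorem to the single membership $\abstime\in\ca{\la\RelST,\concept\ra}$ and applies the Galois principle twice; the paper instead uses the correspondence as a biconditional (Theorem~\ref{thm-erlangen}, imported from \cite{part1}), so that $\aut{\la\RelST,\concept\ra}=\aut{\LclST}$ yields $\la\RelST,\concept\ra\defequiv\LclST$ directly, with no separate definability step for $\abstime$; both closings are sound. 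Second, and more substantively, the paper does \emph{not} prove group maximality by your conjugation argument: it reduces the statement (Theorem~\ref{Borisov2}, via the computations $(\Scal\circ\Triv)\cap\Poi^{\uparrow}=\Triv^{\uparrow}$ and $\Scal\circ\Poi=\Scal\circ\Poi^{\uparrow}$) to a cited corollary of Borisov's theorem, namely that no group lies strictly between $\Triv^{\uparrow}$ and $\Poi^{\uparrow}$. Your direct route does work, but your sketch glosses two steps: given $g\in\group\setminus(\Scal\circ\Triv)$ you must extract a \emph{pure} boost, which requires stripping the scaling and translation (both available in $\group$ since $\Scal\circ\Triv\subseteq\group$) and then invoking the Cartan decomposition $L=R_1 B R_2$ times parity/time-reversal factors, all of which lie in $\Triv$; and conjugating by rotations produces boosts of a single fixed rapidity only, so ``generating the full Lorentz part'' additionally needs the composition argument --- two rapidity-$\chi$ boosts at angle $\theta$ compose to a boost times a Wigner rotation (discardable, since rotations are in $\group$) whose rapidity sweeps all of $[0,2\chi]$ by an intermediate-value argument, after which iteration gives every rapidity. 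With those details supplied, your argument is a correct, self-contained and arguably more elementary replacement for the Borisov citation, whereas the paper's reduction buys brevity by reusing the literature. Finally, your diagnosis of where the real difficulty lies matches the paper exactly: the Galois principle fails for arbitrary automorphism-invariant relations (your example of a relation pinning a ratio of squared Minkowski intervals to $\pi$, which is $\Scal\circ\Poi$-invariant but not semialgebraic, is a correct witness), and the paper correspondingly restricts to finitely field-definable geometries, importing the needed correspondence (Theorems~\ref{THM1} and \ref{thm-erlangen}) from the companion paper \cite{part1} without proof --- so leaving that ingredient unproven, as you do, is not a gap relative to the paper; note only that your version, quantified over all $\Model$ with $\ca{\RelST}\subseteq\ca{\Model}\subseteq\ca{\LclST}$, silently contains the paper's Proposition~\ref{spacetimes-are-geometries}, i.e.\ the fact that any concept of $\LclST$ is itself field-definable.
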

\noindent
or equivalently,
\begin{theorem}[Reformulation of Andr\'eka's conjecture]
\label{nemletezik}
  There is no model $\Model$  for which
\[
\ca{\RelST}\subset\ca{\Model}\subset\ca{\LclST}.
\]
\end{theorem}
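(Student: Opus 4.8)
The plan is to argue through automorphism groups, reducing the statement to a Lie-theoretic maximality fact that I then bridge back to definability by a reconstruction theorem. So let $\Model$ be any model with $\ca{\RelST}\subseteq\ca{\Model}\subseteq\ca{\LclST}$; these inclusions force $M=\realnumb^4$ by the universe observation in the preliminaries, and I want to conclude $\ca{\Model}=\ca{\RelST}$ or $\ca{\Model}=\ca{\LclST}$. Passing to automorphism groups reverses the inclusions: since every concept of $\RelST$ is a concept of $\Model$, each $\sigma\in\aut{\Model}$ respects $\llcon$, giving $\aut{\Model}\subseteq\aut{\RelST}$; and since every concept of $\Model$ is a concept of $\LclST$, each $\sigma\in\aut{\LclST}$ respects all concepts of $\Model$, giving $\aut{\LclST}\subseteq\aut{\Model}$. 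Thus $\aut{\LclST}\subseteq\aut{\Model}\subseteq\aut{\RelST}$, and the whole theorem becomes the claim that this sandwich leaves no room for $\ca{\Model}$ to sit strictly in between.

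First I would pin the two outer groups down explicitly. By the Alexandrov--Zeeman theorem (valid in this dimension) the bijections of $\realnumb^4$ respecting $\llcon$ are exactly the Poincar\'e transformations composed with dilations, so $\aut{\RelST}$ consists of the maps $p\mapsto\lambda\Lambda p+a$ with $\lambda>0$, $\Lambda$ a Lorentz transformation and $a$ a translation. Imposing in addition that such a map respect $\abstime$ means its time component depends only on the input time, which forces $\Lambda$ to be block-diagonal --- a spatial orthogonal map together with a possible time reflection, carrying no genuine boost. Hence $\aut{\LclST}$ consists of the maps $(t,\vec x)\mapsto(\varepsilon\lambda t+t_0,\lambda R\vec x+\vec x_0)$ with $\varepsilon=\pm1$, $R$ orthogonal and $\lambda>0$, and the elements of $\aut{\RelST}\setminus\aut{\LclST}$ are precisely those with a nonzero boost component.

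The bridge from groups back to concepts is the heart of the matter, and here I would use two ingredients. The first is a \emph{Galois bridge}: for every such intermediate $\Model$, a relation is a concept of $\Model$ if{}f it is semialgebraic and invariant under $\aut{\Model}$. The easy direction is immediate, since concepts are always automorphism-invariant and, the primitives $\llcon,\abstime$ being semialgebraic, every concept of $\LclST$ --- hence every concept of $\Model$ --- is semialgebraic. The hard direction requires reconstructing, inside $\Model$, a copy of the ordered field $\realnumb$ together with the affine structure of $\realnumb^4$, a coordinatization canonical only up to the action of $\aut{\Model}$, and then writing any invariant semialgebraic relation frame-independently; this recovery of affine and metric data from the bare light-cone relation is where the real work lies and is the step I expect to be the main obstacle. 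The second ingredient is a \emph{closure trick}: the set of elements of the Lie group $\aut{\RelST}$ respecting a fixed semialgebraic relation is itself semialgebraic, hence closed, so a semialgebraic relation is $\aut{\Model}$-invariant if{}f it is invariant under the closure $\overline{\aut{\Model}}$. This lets me replace the possibly unruly abstract group $\aut{\Model}$ by a closed subgroup squeezed between $\aut{\LclST}$ and $\aut{\RelST}$.

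Finally I would invoke the Lie-theoretic maximality: $\aut{\LclST}$ is a maximal \emph{closed} subgroup of $\aut{\RelST}$. After quotienting out the shared translations and dilations this reduces to the maximality of the time-axis stabilizer in the Lorentz group, which on the identity component is the statement that $SO(3)$ is maximal in $SO^+(1,3)$; that holds because the isotropy representation of $SO(3)$ on the boost directions is irreducible, so adjoining one boost and conjugating by the available rotations already sweeps out a continuum of rapidities and generates everything. Consequently $\overline{\aut{\Model}}$ equals $\aut{\LclST}$ or $\aut{\RelST}$. In the first case $\aut{\Model}=\aut{\LclST}$ (as $\aut{\LclST}$ is closed), so $\abstime$ is semialgebraic and $\aut{\Model}$-invariant, the Galois bridge yields $\abstime\in\ca{\Model}$, and hence $\ca{\Model}=\ca{\LclST}$; in the second case every concept of $\Model$, being semialgebraic and invariant under $\overline{\aut{\Model}}=\aut{\RelST}$, already lies in $\ca{\RelST}$, so $\ca{\Model}=\ca{\RelST}$. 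Either way no strictly intermediate model exists, which is Theorem~\ref{nemletezik}.
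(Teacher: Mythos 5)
Your skeleton is the same as the paper's --- translate concept inclusions into the automorphism sandwich $\aut{\LclST}\subseteq\aut{\Model}\subseteq\aut{\RelST}$ via a Galois-type bridge, pin down the outer groups by Alexandrov--Zeeman and a direct computation, and show the sandwich collapses --- and your one genuinely different ingredient is attractive: by observing that the respecting set of a definable relation is a closed subgroup, you would only need maximality of $\aut{\LclST}$ among \emph{closed} subgroups of $\aut{\RelST}$, whereas the paper invokes Borisov's theorem (Theorem~\ref{Borisov}), which excludes \emph{arbitrary} intermediate subgroups between $\Triv^{\uparrow}$ and $\Poi^{\uparrow}$. However, the proposal has genuine gaps. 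Most concretely, your Galois bridge is false as stated: ``semialgebraic'' in its standard meaning allows real parameters, and parameters break the equivalence. The binary relation $(x-x')^2+(y-y')^2+(z-z')^2=\pi^2(t-t')^2$ is semialgebraic and invariant under $\aut{\LclST}=\Scal\circ\Triv$ (trivial transformations preserve $|t-t'|$ and the spatial distance; scalings multiply both by $|a|$), yet it is not a concept of $\LclST$: every concept of $\LclST$ is definable without parameters in $\la\realnumb,+,\cdot,0,1\ra$, and by quantifier elimination no parameter-free definable set can carve out the transcendental slope $\pi$. The correct class is the paper's \emph{field-definable} (parameter-free) relations; with that substitution your case analysis survives, and the closure trick still works because field-definable relations are semialgebraic and the respecting set is a semialgebraic \emph{subgroup}, hence closed --- note that your ``semialgebraic, hence closed'' is wrong for sets and needs the subgroup property.

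Second, the hard direction of the bridge --- every field-definable, $\aut{\Model}$-invariant relation is a concept of $\Model$ --- is the crux, and you leave it unproven (you flag it yourself as the main obstacle). The paper imports it as Theorem~\ref{THM1} from \cite{part1}, but only for FFD coordinate geometries, i.e.\ models with \emph{finitely many} field-definable relations; your $\Model$ may carry infinitely many primitive relations, plus functions and constants, so even citing that theorem would not cover your case~(a), where you need $\abstime\in\ca{\Model}$. The paper's repair is to never apply the bridge to $\Model$ itself: pick a single witness $\concept\in\ca{\Model}\setminus\ca{\RelST}$, run the whole argument on the FFD expansion $\la\RelST,\concept\ra$ (Proposition~\ref{spacetimes-are-geometries}), conclude $\ca{\la\RelST,\concept\ra}=\ca{\LclST}$, and hence $\ca{\LclST}\subseteq\ca{\Model}$. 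Third, your maximality argument only treats the identity component: irreducibility of the isotropy representation gives that $SO(3)$ is maximal in $SO^{+}(1,3)$, but the components involving time reversal, spatial reflections and negative scalings require real bookkeeping --- this is exactly the content of the paper's Theorem~\ref{Borisov2}, e.g.\ the identities $(\Scal\circ\Triv)\cap\Poi^{\uparrow}=\Triv^{\uparrow}$ and $\Scal\circ\Poi=\Scal\circ\Poi^{\uparrow}$ --- so ``after quotienting out the shared translations and dilations'' cannot be waved through. In short: right architecture, a promising closed-subgroup variant of the group-theoretic step, but the central definability theorem is missing, misstated (semialgebraic versus field-definable), and misapplied to a possibly non-FFD model.
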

The cylindric-relativized set algebra obtained from model $\Model$ is
denoted by $\CA{\Model}$; see Monk~\cite{Monk2000}.  Roughly speaking,
$\CA{\Model}$ is an algebraic structure whose universe is essentially
the set of concepts of $\Model$, whose operations correspond to the
logical connectives, and among whose constants are ones that
correspond to logical \textit{true} and \textit{false}.  We note that
$\CA{\Model}$ is a subalgebra of $\CA{\Nodel}$ if{}f
$\big(\ca{\Model}\subseteq\ca{\Nodel} \text{ and } M=N\big)$, whence
the above formulations are equivalent to Andr{\'e}ka's original
formulation, viz.
\begin{theorem}[Original formulation of Andr\'eka's Conjecture]
\label{maximal}
$\CA{\RelST}$ is a maximal proper subalgebra of $\CA{\LclST}$.
 \end{theorem}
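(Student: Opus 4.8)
\emph{Overview.} The plan is to prove the equivalent ``no intermediate model'' formulation, Theorem~\ref{nemletezik}; the engine is the Galois correspondence between concepts and automorphisms. First I would record that, because a bijection of $\realnumb^4$ lies in $\aut{\Model}$ exactly when it respects every concept of $\Model$ (automorphisms respect all definable relations, and functions and constants have definable graphs), the assignment $\Model\mapsto\aut{\Model}$ depends only on $\ca{\Model}$ and is inclusion-reversing. Hence a model $\Model$ with $\ca{\RelST}\subseteq\ca{\Model}\subseteq\ca{\LclST}$ yields $\aut{\LclST}\subseteq\aut{\Model}\subseteq\aut{\RelST}$, with $\aut{\Model}$ closed in the topology of pointwise convergence. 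The goal becomes to show that the only closed groups trapped between $\aut{\LclST}$ and $\aut{\RelST}$ are the two endpoints, and then to transfer this back to the concept side.

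\emph{Computing the two groups.} Next I would identify the endpoints explicitly. By the Alexandrov--Zeeman theorem (valid here since the spatial dimension exceeds $2$), every bijection respecting $\llcon$ is affine, and $\aut{\RelST}$ is the group of maps $p\mapsto\lambda\Lambda p+a$ with $\lambda>0$, $\Lambda\in O(1,3)$, $a\in\realnumb^4$, i.e.\ the Poincar\'e group enlarged by dilations. Imposing, in addition, respect for $\abstime$ forces the affine map to preserve the foliation $\{t=\text{const}\}$, which kills every boost: $\aut{\LclST}$ consists of the maps $(t,\vec x)\mapsto(\varepsilon\lambda t+\tau,\ \lambda R\vec x+\vec a)$ with $\varepsilon\in\{\pm1\}$, $\lambda>0$, $R\in O(3)$. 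In particular $\aut{\LclST}\subsetneq\aut{\RelST}$, the witnessing extra elements being exactly the boosts (cf.\ Proposition~\ref{prop-tartalmazas}), and every concept of either model is semialgebraic because $\llcon$ and $\abstime$ are and Tarski--Seidenberg eliminates the quantifiers $\exists v\in\realnumb^4$.

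\emph{Maximality of $\aut{\LclST}$.} The key computation is that $\aut{\LclST}$ is a maximal closed subgroup of $\aut{\RelST}$. Since automorphism groups are closed, any intermediate $\aut{\Model}$ is a closed --- hence Lie --- subgroup, so it suffices to work at the level of Lie algebras. Writing $\mathfrak{so}(1,3)=\mathfrak{so}(3)\oplus\mathfrak b$, the boost summand $\mathfrak b$ is the $3$-dimensional vector representation of $\mathfrak{so}(3)$ and is therefore irreducible. Any subalgebra lying strictly between $\mathrm{Lie}(\aut{\LclST})$ and $\mathrm{Lie}(\aut{\RelST})$ contains $\mathfrak{so}(3)$ together with some nonzero boost; irreducibility then forces it to contain all of $\mathfrak b$, so it equals $\mathrm{Lie}(\aut{\RelST})$. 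As the discrete parts ($P$, $T$) already sit in $\aut{\LclST}$, there is no intermediate closed subgroup, and thus $\aut{\Model}\in\{\aut{\LclST},\aut{\RelST}\}$.

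\emph{The main obstacle.} What remains --- and is the crux --- is to convert equality of automorphism groups back into equality of concepts, i.e.\ to upgrade $\aut{\Model}=\aut{\LclST}$ to $\ca{\Model}=\ca{\LclST}$ (and likewise for $\RelST$). The inclusion ``definable $\Rightarrow$ invariant'' is automatic, but the converse fails for general structures over $\realnumb$, which are far from $\omega$-categorical; indeed the finite-dimensional group $\aut{\RelST}$ has plenty of invariant but non-semialgebraic relations. The plan to close this gap is to restrict to the semialgebraic world, where the first step already lives, and to prove the converse Galois property there: every \emph{semialgebraic} relation invariant under $\aut{\RelST}$ (respectively $\aut{\LclST}$) is definable in $\RelST$ (respectively $\LclST$). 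I expect this to be the hard part, and the natural route is to coordinatize --- to interpret the ordered field $\realnumb$ and a reference frame inside $\RelST$ using the causal structure recoverable from $\llcon$ \`a la Alexandrov--Zeeman --- so that invariant semialgebraic sets are exactly unions of orbit families cut out by the invariants, each of which is then first-order definable. Granting this, $\ca{\Model}$ coincides with one of the two endpoints, so no strictly intermediate model exists, which is Theorem~\ref{nemletezik} and hence, by the equivalences noted above, Theorem~\ref{maximal}.
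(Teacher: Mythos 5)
Your skeleton is the same as the paper's: establish the concept--automorphism Galois correspondence, compute $\aut{\RelST}=\Scal\circ\Poi$ via Alexandrov--Zeeman and $\aut{\LclST}=\Scal\circ\Triv$, show there is no group strictly between them, and transfer back to concepts. Your middle step is a genuinely different route: the paper reduces to Borisov's theorem (Theorem~\ref{Borisov}, via \cite{OnBorisov,Borisov1986}), which excludes \emph{arbitrary} groups between $\Triv^{\uparrow}$ and $\Poi^{\uparrow}$, with a careful bookkeeping argument (Theorem~\ref{Borisov2}) handling scalings and non-orthochronous parts; you instead propose a Lie-theoretic maximality argument for \emph{closed} subgroups. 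That restriction is admissible in principle, since the relevant groups respect only semialgebraic relations and are cut out by first-order conditions over the real field, hence are semialgebraic (therefore closed) subgroups of the affine group --- but your stated justification invokes the wrong topology (closedness in $\mathrm{Sym}(\realnumb^4)$ under pointwise convergence with discrete target does not yield a Lie subgroup), and your Lie-algebra computation by itself does not exclude a closed intermediate group whose identity component equals that of $\aut{\LclST}$ but which contains an extra boost as a ``discrete'' element; your remark about $P$ and $T$ addresses only the component group of $\aut{\RelST}$. This is fixable (e.g.\ the normalizer of $\mathrm{SO}(3)$ in the connected Lorentz group is $\mathrm{SO}(3)$ itself, so the closed group generated by $\aut{\LclST}$ and any boost has strictly larger identity component), but as written the step is gapped.

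The decisive gap, however, is the one you yourself flag: the converse Galois property --- every semialgebraic relation invariant under the automorphism group is first-order definable --- is asserted as a plan (``Granting this'') rather than proven. This is exactly the paper's Theorem~\ref{THM1}, imported from \cite{part1}, which powers Theorem~\ref{thm-erlangen}; your coordinatization sketch is indeed the mechanism used there, but a sketch is not a proof of the crux. Worse, the version you state (for the two endpoints $\RelST$ and $\LclST$ only) would not suffice even if granted: from $\aut{\Model}=\aut{\LclST}$ the inclusion $\ca{\Model}\subseteq\ca{\LclST}$ is the easy direction, but to conclude $\ca{\LclST}\subseteq\ca{\Model}$ you must show that an $\aut{\Model}$-invariant semialgebraic relation such as $\abstime$ is definable \emph{in $\Model$}, i.e.\ you need the Galois property for arbitrary intermediate geometries --- equivalently for every expansion $\la\RelST,\concept\ra$ --- not merely for the endpoints. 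This is precisely why the paper states Theorem~\ref{THM1} for all FFD coordinate geometries and verifies in Proposition~\ref{spacetimes-are-geometries} that $\la\RelST,\concept\ra$ is one. So: correct architecture and a plausible, genuinely alternative group-theoretic core, but the central definability theorem is assumed rather than established, and the statement you would need is strictly stronger than the one you propose to prove.
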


\begin{figure}[!hbt]
  \begin{center}
    \begin{tikzpicture}[]
      \draw[thick] (0,0) ellipse [x radius=5,y radius=2];
      \draw[thick] (-0.9,-.2) ellipse [x radius=2.5,y radius=1.2];
      \draw[thick, loosely dotted] (-0.8,0) ellipse [x radius=3.5,y radius=1.6];

      \node at (1.5,1.2) {$\nexists$};
      \node[rotate=13] (R) at (-2,1.1) {$\ca{\RelST}$};
      \node[rotate=-11] (N) at (2.5,2) {$\ca{\LclST}$};

      \draw[fill] (2,0.2) circle [radius=.04] node[right] {$\concept$};

      \begin{scope}[shift={(-1,0)}]
        \draw[fill] (-.3,0) circle [radius=.04] node[left=2] {$\llcon$};
        \draw[red] (0,0.5) ellipse [x radius=0.5,y radius=0.05];
        \draw[red] (-0.5,0.5) to (0.5,-0.5) (-0.5,-0.5) to (0.5,0.5);
        \draw[red] (0,-0.5) ellipse [x radius=0.5,y radius=0.05];
      \end{scope}

      \begin{scope}[shift={(0.3,0.3)}]
        \draw[fill] (4.1,-0.2) circle [radius=.04] node[right] {$\abstime$};
        \draw[fill,blue, opacity=0.5] (3,0) to (4,0) to (3.5,-0.25) to (2.5,-0.25) to cycle;
        \draw[fill,blue, opacity=0.5] (3,-0.35) to (4,-0.35) to (3.5,-0.6) to (2.5,-0.6) to cycle;
      \end{scope}
    \end{tikzpicture}
  \end{center}
  \caption{This figure illustrates the relation between the concepts
    of relativistic spacetime $\RelST$ and late classical spacetime
    $\LclST$. \label{fig-venn}}
\end{figure}
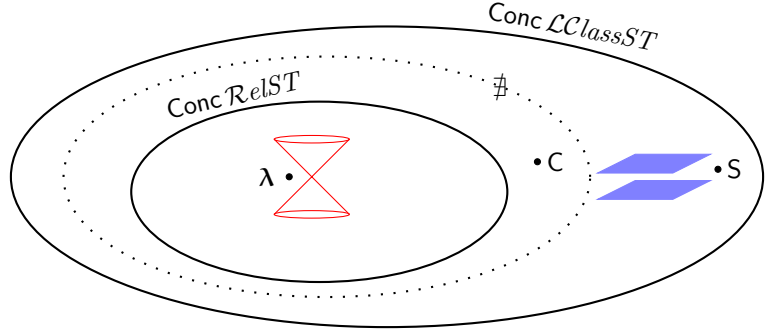

\section{Sketch of the proof}
In this section, we give an essentially complete sketch of the proof
of Theorem~\ref{Hajnalconjecture}. Completing the details requires
proving a few supporting results, but we postpone these to later
sections.

We will define finitely field-definable (FFD) coordinate geometries
(Definition~\ref{FFD-def}), and show (Theorem~\ref{thm-erlangen}) that
for any two such geometries $\geometry$ and $\geometry'$,
\begin{align}
\label{egy}
\ca{\geometry}\subseteq\ca{\geometry'}\ \Longleftrightarrow &\
 \ \aut{\geometry}\supseteq\aut{\geometry'}, \\
\label{ketto}
\ca{\geometry}\subset\ca{\geometry'}\ \Longleftrightarrow &\ \
\aut{\geometry}\supset\aut{\geometry'}\ \text{ and }\\ 
\label{harom}
\ca{\geometry}=\ca{\geometry'}\ \Longleftrightarrow &\ \
\aut{\geometry}=\aut{\geometry'}.
\end{align}
We will see (Proposition~\ref{spacetimes-are-geometries}) that $\RelST$,
$\LclST$ and $\la \RelST,\concept\ra$ are all FFD
geometries. 

Clearly, from the assumptions of the conjecture (that $\concept\in\ca{\LclST}$ and
$\concept\not\in\ca{\RelST}$), we have that
\[
\ca{\RelST}\subset\ca{\la \RelST,\concept\ra}\subseteq\ca{\LclST},
\]
and we have already seen (Proposition~\ref{prop-tartalmazas}) that
these assumptions are not contradictory.  By \eqref{egy} and
\eqref{ketto}, this implies that
\begin{equation}
\label{milegyenaneve}
\aut{\RelST}\supset\aut{\la \RelST,\concept\ra}\supseteq\aut
{\LclST}.
\end{equation}
We will see (Corollary~\ref{nincskozte})
that there is no set of functions $\group$ that forms a group
under composition and satisfies
$\aut{\RelST}\supset\group\supset\aut{\LclST}$. 
This together with \eqref{milegyenaneve}
implies that $\aut{\la\RelST,\concept\ra}=\aut{\LclST}$; hence by
\eqref{harom}, $\la\RelST,\concept\ra\defequiv\LclST$ as required.

To turn the above sketch into a complete proof, we only need to
  show the supporting statements. We do this now; we begin by introducing coordinate geometries.

\section{Coordinate geometries}

The ternary relation $\Col$ of \emph{collinearity} on
the set of spacetime points $\realnumb^4$ is defined 
for every $\vec{p},\vec{q},\vec{r}\in\realnumb^4$ by
\[
\Col(\vec{p},\vec{q},\vec{r}\,)\quad \defiff\quad  
\vec{q}=\vec{p}+ a(\vec{r}-\vec{p}\,)\ \text{ for some }\ 
a\in\realnumb\text{, or }\vec{r}=\vec{p}.
\]
\begin{definition}
\label{geom-def}
A model $\geometry$ is called a ($4$-dimensional) 
\emph{coordinate geometry} if{}f the following conditions
are satisfied:
\begin{itemize}
\item the universe of $\geometry$ is $\realnumb^4$ 
(the set of \emph{points});
\item $\geometry$ contains no functions or constants;
\item
the ternary relation $\Col$ of collinearity 
on points is definable in $\geometry$.
\end{itemize}
\end{definition}

There is a natural correspondence between $n$-tuples of points in
$\realnumb^4$ ($n$-tuples of ($4$-tuples over $\realnumb$)) 
and $(4n)$-tuples over
$\realnumb$. Thus there is a natural correspondence
between $n$-ary relations on $\realnumb^4$ and
$4n$-ary relations on $\realnumb$. For
example, if $\rel$ is a ternary relation on
$\realnumb^4$, then the corresponding $(3\times 4)$-ary
relation $\widehat{\rel}$ 
on $\realnumb$ is defined by
\begin{multline*}
\widehat{\rel}(p_1,p_2,p_3,p_4,q_1,q_2,q_3,q_4,r_1,r_2,r_3,
r_4)
\defiff \\ \rel((p_1,p_2,p_3,p_4),
( q_1,q_2,q_3,q_4),(r_1,r_2,r_3, r_4))
\end{multline*}
for every $p_i,q_i,r_i\in\realnumb$ ($i\in\{1,2,3,4\}$). This example
can be generalized to $n$-ary relations straightforwardly.  If $\rel$
is an $n$-ary relation on $\realnumb^4$, then the corresponding
$4n$-ary relation on $\realnumb$ will be denoted by $\widehat{\rel}$.
For a precise definition of $\widehat{\rel}$, see \cite{part1}.

We say that an $n$-ary relation $\rel\subseteq
\left(\realnumb^4\right)^n$ is \emph{field-definable} if{}f the corresponding
$4n$-ary relation $\widehat{\rel}\subseteq {\realnumb}^{4n}$ is
definable in the field of reals $\la \realnumb, +,\cdot,0,1\ra$.

\begin{definition}
\label{FFD-def} 
We call coordinate geometry $\geometry$ 
\emph{finitely field-definable (FFD)} if{}f 
$\geometry$ contains only finitely many relations and
they are all field-definable.
\end{definition}

\begin{prop}\label{prop-fd}
The ternary relation $\Col$ 
and binary relations $\llcon$ and $\abstime$ on 
$\realnumb^4$ are field-definable.
\end{prop}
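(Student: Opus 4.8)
The plan is to exhibit, for each of the three relations, an explicit first-order formula in the language $\la \realnumb, +, \cdot, 0, 1\ra$ of the field of reals that defines the associated flattened relation $\widehat{\Col}$, $\widehat{\llcon}$, and $\widehat{\abstime}$, and then to verify \eqref{eq:def-def} by unwinding the correspondence between points of $\realnumb^4$ and tuples over $\realnumb$. Before writing the formulas I would record two harmless abbreviations. First, additive inverse is field-definable, since $y=-x$ holds iff $x+y=0$ and this $y$ is unique; consequently subtraction $x-y$ (the unique $z$ with $y+z=x$) may be used freely as a defined term. Second, $x^2$ abbreviates $x\cdot x$. With these conventions, the geometric defining conditions transcribe directly into field formulas.

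For $\abstime$, recalling that $(t,x,y,z)\,\abstime\,(t',x',y',z')$ holds iff $t=t'$, the flattened relation $\widehat{\abstime}$ is the $8$-ary relation defined by the atomic formula $\fv{v}_1=\fv{v}_5$. For $\llcon$, the flattened relation $\widehat{\llcon}$ is the $8$-ary relation defined by
\[
(\fv{v}_1-\fv{v}_5)^2=(\fv{v}_2-\fv{v}_6)^2+(\fv{v}_3-\fv{v}_7)^2+(\fv{v}_4-\fv{v}_8)^2,
\]
which, after expanding the abbreviations for subtraction and squaring, is a first-order field formula with free variables among $\fv{v}_1,\dots,\fv{v}_8$. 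In both cases equivalence with the original definition is immediate.

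The relation $\Col$ is the only one whose definition involves a quantifier, namely the implicit scalar $a$. Writing $\vec p=(\fv{v}_1,\dots,\fv{v}_4)$, $\vec q=(\fv{v}_5,\dots,\fv{v}_8)$, and $\vec r=(\fv{v}_9,\dots,\fv{v}_{12})$, I would define the $12$-ary relation $\widehat{\Col}$ by
\[
\left(\exists a\ \bigwedge_{i=1}^{4}\fv{v}_{4+i}=\fv{v}_i+a\cdot(\fv{v}_{8+i}-\fv{v}_i)\right)\ \lor\ \bigwedge_{i=1}^{4}\fv{v}_{8+i}=\fv{v}_i,
\]
where the single existential quantifier $\exists a$ ranges over the scalar and the second disjunct captures the degenerate clause $\vec r=\vec p$ from the definition of collinearity. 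The parentheses fix the scope of $\exists a$ so that it binds only the first conjunction, and the result is a legitimate first-order formula over the field.

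No genuine obstacle arises here: the content of the proposition is simply that the geometric definitions, which already refer only to field operations (sums, scalar multiples, and a quadratic form) together with a single scalar quantifier for $\Col$, can be written in the field language. The only point requiring care is the preliminary observation that additive inverse and subtraction, though absent from the chosen signature $\la \realnumb, +, \cdot, 0, 1\ra$, are first-order definable from it; once this is noted, each formula above is manifestly correct, and checking \eqref{eq:def-def} for each relation is a direct unwinding of the definitions.
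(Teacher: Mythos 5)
Your proposal is correct and takes essentially the same approach as the paper: both translate the defining conditions directly into first-order formulas over $\la \realnumb,+,\cdot,0,1\ra$, with an existential quantifier for the scalar in $\Col$ and a disjunct for the degenerate case $\vec{r}=\vec{p}$. The only (cosmetic) difference is that the paper eliminates subtraction by algebraically reorganizing the equations (e.g., writing $v_{4+i}+v_{13}\cdot v_i=v_i+v_{13}\cdot v_{8+i}$), whereas you justify subtraction as a definable term --- both are standard and equally valid.
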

\begin{proof}
  It is straightforward to turn the definitions of $\Col$, $\llcon$
  and $\abstime$ into formulas on the language of fields that define
  $\widehat{\Col}$, $\widehat{\llcon}$ and $\widehat{\abstime}$ in
  $\la \realnumb, +,\cdot,0,1\ra$. For example, after reorganizing the
  equation to contain only basic operations and associating sequences
  of variables to points as $\vec{p}=(v_1,v_2,v_3,v_4)$,
  $\vec{q}=(v_5,v_6,v_7,v_8)$ and
  $\vec{r}=(v_9,v_{10},v_{11},v_{12})$, one gets formula
\begin{equation*}
\varphi_{\Col}(v_1,\ldots,v_{12})\defeq \exists
v_{13}\left(\bigwedge_{i=1}^4
\left(v_{4+i}+v_{13}\cdot v_i=v_i+v_{13}\cdot v_{8+i}\right) \right)\vee
\bigwedge_{i=1}^4 v_{8+i}=v_i
\end{equation*}
that defines $\widehat{\Col}$. Finding corresponding formulas for
$\widehat{\llcon}$ and $\widehat{\abstime}$ is similarly
straightforward.
\end{proof}

The following is a special case of  \cite[Thm.5.1.2]{part1}.
\begin{theorem}
\label{THM1}
Let $\geometry$ be an FFD coordinate geometry, and let $\rel$ be a
relation on $\realnumb^4$. Then $\rel\in\ca{\geometry}$ exactly
if $\rel$ is field-definable and closed under the automorphisms of
$\geometry$. \qed
\end{theorem}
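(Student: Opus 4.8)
The plan is to prove both implications separately: the forward direction is routine, while the converse rests on coordinatizing $\geometry$ from $\Col$ and then eliminating the coordinate frame using the closure hypothesis. Throughout, let $R_1,\dots,R_k$ denote the (finitely many) primitive relations of $\geometry$; since $\Col$ is definable in $\geometry$, any bijection respecting all the $R_i$ also respects $\Col$.

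For the forward direction, suppose $\rel\in\ca{\geometry}$, say $\rel$ is defined by a first-order formula $\varphi$ over $R_1,\dots,R_k$. First, automorphisms respect every definable relation, so $\rel$ is respected by, hence closed under, $\aut{\geometry}$. Second, to see that $\rel$ is field-definable I would translate $\varphi$ into a formula in the language of fields defining $\widehat{\rel}$: replace each point variable by a block of four real variables, each quantifier over points by a block of four real quantifiers (legitimate because the universe is all of $\realnumb^4$), each equation between point variables by the conjunction of the four coordinate equations, and each atomic $R_i(\dots)$ by a field formula defining $\widehat{R_i}$, which exists because $\geometry$ is FFD. Since field-definable relations are closed under substitution and the first-order operations, this yields a field definition of $\widehat{\rel}$.

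For the converse, suppose $\rel$ is field-definable and closed under $\aut{\geometry}$. The engine is the classical coordinatization of affine geometry: an ordered affine frame is a tuple $\bar y$ of points in general position, a condition expressible from $\Col$, and, given such a frame as parameters, the field operations of $\realnumb$ and the map sending each point to its four $\bar y$-coordinates are definable from $\Col$ (fundamental theorem of affine geometry together with the von Staudt calculus). Consequently, for every field-definable relation $S$ with a fixed field formula for $\widehat{S}$, there is a single $\geometry$-formula $\psi_S(\dots;\bar y)$ expressing that $\widehat{S}$ holds of the $\bar y$-coordinates of the argument points; I would apply this to $\rel$ itself and to each $R_i$. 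Next, because every automorphism respects $\Col$, the fundamental theorem of affine geometry (and the fact that $\realnumb$ has no nontrivial field automorphism) shows that $\aut{\geometry}$ consists of affine bijections, and is precisely the group of affine maps preserving all the $R_i$. Since $\aut{\geometry}$ is a group, closure of $\rel$ under all its members makes $\rel$ invariant, i.e.\ $g\rel=\rel$ for every $g\in\aut{\geometry}$.

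The crux is to eliminate the frame parameter. I would use the parameter-free formula $\chi(\bar y)$ stating that $\bar y$ is a frame and that, in the coordinates it induces, each $R_i$ is described by its standard field formula, i.e.\ $\forall\dots\,\big(R_i(\dots)\leftrightarrow\psi_{R_i}(\dots;\bar y)\big)$; this is a finite conjunction \emph{exactly because} $\geometry$ is FFD, and it is satisfied by the standard frame. A short computation shows that $\chi(\bar y)$ holds precisely when the affine map carrying the standard frame to $\bar y$ lies in $\aut{\geometry}$, so $\chi$ defines the $\aut{\geometry}$-orbit $O$ of the standard frame. I would then set
\[
\rel(\vec x_1,\dots,\vec x_n)\ \defiff\ \exists\bar y\,\big(\chi(\bar y)\wedge\psi_{\rel}(\vec x_1,\dots,\vec x_n;\bar y)\big).
\]
For $\bar y\in O$ the map $g$ taking the standard frame to $\bar y$ is an automorphism, so $\psi_{\rel}(\dots;\bar y)$ asserts $g^{-1}(\vec x_1,\dots,\vec x_n)\in\rel$, which by the invariance of $\rel$ equals $(\vec x_1,\dots,\vec x_n)\in\rel$; as $O\neq\emptyset$, the right-hand side defines exactly $\rel$, so $\rel\in\ca{\geometry}$. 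I expect the main obstacle to lie entirely in this converse: making the coordinatization fully rigorous, and then recognizing that the automorphism-orbit of the standard frame is itself parameter-free definable via $\chi$. This last observation is what converts mere closure into genuine parameter-free definability, and it is exactly the point at which finiteness of the relation set (so that $\chi$ is a first-order formula) and field-definability of the $R_i$ (so that $\chi$ can be written at all) become indispensable.
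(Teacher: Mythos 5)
Your proof outline is sound, but note first that the paper never proves Theorem~\ref{THM1} at all: it is imported as a special case of \cite[Thm.5.1.2]{part1}, with the $\Box$ marking the absence of a proof. So what you have done is supply a self-contained argument for a result the authors deliberately outsource, and your route is the standard one for such ``definable $=$ field-definable $+$ invariant'' theorems, very much in the spirit of the cited source. Your forward direction (induction on formulas, replacing point quantifiers by blocks of four real quantifiers and atoms $R_i$ by field formulas for $\widehat{R_i}$) is routine and correct. In the converse, the crux checks out: since automorphisms respect $\Col$, the fundamental theorem of affine geometry together with the rigidity of $\realnumb$ (exactly what the paper itself invokes in Proposition~\ref{fundamental}) makes $\aut{\geometry}$ the group of affine bijections respecting the $R_i$; your formula $\chi(\bar y)$ then holds iff the unique affine map $g$ carrying the standard frame to $\bar y$ satisfies $R_i(\vec x_1,\dots)\leftrightarrow R_i(g^{-1}(\vec x_1),\dots)$ for all $i$ and all tuples, \ie iff $g\in\aut{\geometry}$, so $\chi$ does define the orbit of the standard frame; and in your final biconditional, closure of $\rel$ under the witnessing automorphism gives one direction while the standard frame (where $\psi_{\rel}$ reduces to $\widehat{\rel}$ on standard coordinates) gives the other. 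You also correctly isolate where each hypothesis is used: finiteness of the signature makes $\chi$ a first-order formula, and field-definability of the primitives makes the $\psi_{R_i}$ exist at all. The details you flag as standard genuinely are, though if you wrote this out in full you would need to verify three of them explicitly: that ``$\bar y$ is in general position'' is first-order over $\Col$ (boundedly many iterations of line-closure suffice in fixed dimension $4$); that scalars are encoded as points on a fixed coordinate axis, so that quantifiers over $\realnumb$ in the field formula become point quantifiers relativized to that axis via the von Staudt operations; and that the fundamental theorem of affine geometry applies because $4\geq 2$. None of these is a gap, and your observation that parameter-free definability of the frame orbit is what converts invariance into definability is precisely the key idea.
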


\begin{prop}
\label{spacetimes-are-geometries}
$\RelST$, $\LclST$ and $\la \RelST,\concept\ra$ for all
$\concept\in\ca{\LclST}$ are FFD coordinate geometries.
\end{prop}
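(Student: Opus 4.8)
The plan is to verify directly that each of the three models satisfies the three defining conditions of an FFD coordinate geometry from Definitions~\ref{geom-def} and~\ref{FFD-def}. First I would dispatch the two easy structural conditions simultaneously for all three models. Each of $\RelST$, $\LclST$ and $\la\RelST,\concept\ra$ has universe $\realnumb^4$ by construction, so the first bullet of Definition~\ref{geom-def} holds. Each is presented purely with relation symbols ($\llcon$; $\abstime$ and $\llcon$; and $\llcon$ together with $\concept$, respectively) and carries no function or constant symbols, so the second bullet holds as well. Finiteness of the relation set in Definition~\ref{FFD-def} is likewise immediate, since each model has at most two named relations.

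Next I would handle field-definability of the listed relations, which is where Proposition~\ref{prop-fd} does most of the work. That proposition already tells us that $\widehat{\Col}$, $\widehat{\llcon}$ and $\widehat{\abstime}$ are definable in $\la\realnumb,+,\cdot,0,1\ra$, so the relations $\llcon$ and $\abstime$ appearing in the three models are field-definable. The remaining relation to check is $\concept$. Here the hypothesis that $\concept\in\ca{\LclST}$ is essential: since $\LclST$ is (by the conditions already verified) an FFD coordinate geometry whose relations are $\abstime$ and $\llcon$, Theorem~\ref{THM1} applies to it, and the forward direction of that theorem gives that every element of $\ca{\LclST}$ is field-definable. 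In particular $\concept$ is field-definable, so $\la\RelST,\concept\ra$ is FFD.

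Finally I would confirm the collinearity condition, the third bullet of Definition~\ref{geom-def}, namely that $\Col$ is \emph{definable} (in the first-order sense of~\eqref{eq:def-def}) in each model. It suffices to prove this for $\RelST=\la\realnumb^4,\llcon\ra$, since $\LclST$ and $\la\RelST,\concept\ra$ are expansions of $\RelST$ and hence inherit the definability of any relation definable in $\RelST$. The substantive task is therefore to exhibit a first-order formula in the single relation symbol $\llcon$ that defines collinearity. The key geometric fact is that lightlike-relatedness determines the full affine/linear structure of Minkowski space: the classical Alexandrov--Zeeman circle of ideas shows that the light-cone relation recovers lines, since collinearity can be captured by suitable incidence patterns of light cones (for instance, three points are collinear precisely when the light cones through them interact in a way expressible by a bounded combination of $\llcon$-statements, treating the degenerate lightlike directions with care). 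I expect this collinearity-definability step to be the main obstacle, both because it requires an explicit $\llcon$-formula rather than an appeal to an already-proven proposition, and because the lightlike and non-lightlike directions must be treated separately to avoid the degenerate cases where $\Col$ cannot be read off from cone incidences in the naive way.
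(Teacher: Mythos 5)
Your proposal is correct and follows essentially the same route as the paper's own proof: the structural conditions are immediate, field-definability of $\llcon$ and $\abstime$ is Proposition~\ref{prop-fd}, field-definability of $\concept$ comes from the forward direction of Theorem~\ref{THM1} applied to $\LclST$, and everything reduces to the definability of $\Col$ from $\llcon$. The step you flag as the main obstacle is exactly the step the paper discharges by citation (see \cite[\S 2.3]{Pambuccian07}), so your anticipated light-cone-incidence construction is a known classical result rather than new work; the one caution is ordering --- being an FFD coordinate geometry includes the $\Col$-definability clause, so you must verify that $\Col$ is definable in $\LclST$ \emph{before} invoking Theorem~\ref{THM1} on $\LclST$, whereas your write-up verifies it afterwards (a harmless issue, since that verification does not depend on $\concept$, but the paragraphs should be swapped).
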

\begin{proof}
By Proposition~\ref{prop-fd}, in the case of $\RelST$ and $\LclST$,
the only thing that has to be proven is that $\Col$ is definable in
$\RelST$ and $\LclST$, which follows because $\Col$ can be defined
from $\llcon$, see, \eg \cite[\S 2.3]{Pambuccian07}. For the case $\la
\RelST,\concept\ra$, we need to show that $\concept$ is
field-definable, which follows from Theorem~\ref{THM1} because
$\concept$ is a concept of an FFD coordinate geometry.
\end{proof}

The following is a special case of \cite[Thm.5.1.4 and
  Cor.5.1.5]{part1}.

\begin{theorem}
\label{thm-erlangen}
Assume that  $\geometry$ and $\geometry'$ are
FFD coordinate geometries. 
Then
\begin{enumerate}[(i)]
\item $\ca{\geometry}\subseteq\ca{\geometry'}\ \Longleftrightarrow\
\aut{\geometry}\supseteq\aut{\geometry'}$, and hence,
\item  $\ca{\geometry}=\ca{\geometry'}\ \Longleftrightarrow\
\aut{\geometry}=\aut{\geometry'}$.
\item\label{item-proper}  $\ca{\geometry}\subset\ca{\geometry'}\ \Longleftrightarrow\
\aut{\geometry}\supset\aut{\geometry'}$. \qed
\end{enumerate}
\end{theorem}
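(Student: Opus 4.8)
The plan is to derive all three equivalences from Theorem~\ref{THM1}, the characterization of the concepts of an FFD geometry as \emph{exactly} the field-definable relations closed under its automorphisms. Since (ii) and (iii) follow formally from (i), I would concentrate on (i), namely $\ca{\geometry}\subseteq\ca{\geometry'}\iff\aut{\geometry}\supseteq\aut{\geometry'}$, and prove the two directions separately.

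For the direction $\Rightarrow$ --- which needs no FFD hypothesis --- I would take an arbitrary $f\in\aut{\geometry'}$ and show $f\in\aut{\geometry}$. Each basic relation $\rel$ of $\geometry$ is defined by an atomic formula, so $\rel\in\ca{\geometry}\subseteq\ca{\geometry'}$; hence $\rel$ is definable in $\geometry'$ and is therefore respected by the automorphism $f$, by the standard invariance of definable relations under automorphisms. As $f$ is a bijection of $\realnumb^4$ respecting every basic relation of the purely relational structure $\geometry$, it is an automorphism of $\geometry$. This gives $\aut{\geometry'}\subseteq\aut{\geometry}$, as required.

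For the direction $\Leftarrow$, which is where Theorem~\ref{THM1} does the real work, I would take $\rel\in\ca{\geometry}$. Since $\geometry$ is FFD, Theorem~\ref{THM1} tells us $\rel$ is field-definable and closed under $\aut{\geometry}$. The hypothesis $\aut{\geometry'}\subseteq\aut{\geometry}$ then immediately gives that $\rel$ is closed under every element of $\aut{\geometry'}$. Applying Theorem~\ref{THM1} in the other direction to the FFD geometry $\geometry'$, a field-definable relation closed under $\aut{\geometry'}$ is a concept of $\geometry'$, so $\rel\in\ca{\geometry'}$, whence $\ca{\geometry}\subseteq\ca{\geometry'}$.

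Finally, (ii) follows by applying (i) to both inclusions $\ca{\geometry}\subseteq\ca{\geometry'}$ and $\ca{\geometry'}\subseteq\ca{\geometry}$. For (iii), I would note that $\ca{\geometry}\subset\ca{\geometry'}$ means $\ca{\geometry}\subseteq\ca{\geometry'}$ together with $\ca{\geometry}\neq\ca{\geometry'}$; by (i) and (ii) this is equivalent to $\aut{\geometry}\supseteq\aut{\geometry'}$ together with $\aut{\geometry}\neq\aut{\geometry'}$, i.e.\ to $\aut{\geometry}\supset\aut{\geometry'}$. The only genuine content lies in the $\Leftarrow$ direction of (i): converting the semantic condition ``closed under $\aut{\geometry'}$'' back into definability is exactly what requires the full strength of Theorem~\ref{THM1} (and hence the FFD hypothesis), since in general invariance under automorphisms is strictly weaker than first-order definability; everything else is either soft model theory or Boolean bookkeeping.
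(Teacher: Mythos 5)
Your proposal is correct, but it cannot be compared line-by-line with a proof in the paper, because the paper offers none: Theorem~\ref{thm-erlangen} is stated as a special case of Thm.~5.1.4 and Cor.~5.1.5 of \cite{part1}, and the $\Box$ marks the absence of a proof. What you have done is reconstruct the natural argument from Theorem~\ref{THM1} (itself quoted from \cite[Thm.~5.1.2]{part1}), and every step checks out. In the direction $\ca{\geometry}\subseteq\ca{\geometry'}\Rightarrow\aut{\geometry}\supseteq\aut{\geometry'}$ you correctly use only soft facts: each basic relation of $\geometry$ is a concept of $\geometry$, hence of $\geometry'$, hence respected by any $f\in\aut{\geometry'}$; and since coordinate geometries are by Definition~\ref{geom-def} purely relational with common universe $\realnumb^4$ (two hypotheses you use tacitly but which the definition guarantees), a bijection respecting all basic relations of $\geometry$ is an automorphism of $\geometry$, exactly as the paper's notation section records. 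In the converse direction your double application of Theorem~\ref{THM1} --- extract ``field-definable and closed under $\aut{\geometry}$'' from $\rel\in\ca{\geometry}$, weaken closure to the subgroup $\aut{\geometry'}$, then reassemble into $\rel\in\ca{\geometry'}$ --- is precisely where both FFD hypotheses are spent, and your closing remark that automorphism-invariance is in general strictly weaker than definability correctly identifies why Theorem~\ref{THM1} is indispensable there. Items (ii) and (iii) are, as you say, Boolean bookkeeping from (i). The trade-off between your route and the paper's: your derivation makes the present paper self-contained modulo the single imported Theorem~\ref{THM1}, whereas the citation to \cite{part1} inherits the statement in its full generality (arbitrary dimension $d\geq 2$ over arbitrary ordered fields, per the remarks preceding Proposition~\ref{connection}); presumably the proof in \cite{part1} runs along the same lines as yours, so the difference is one of packaging rather than of mathematical substance.
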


\section{Transformations}

Here we introduce some sets of transformations and show their
connections with the automorphisms of the introduced spacetimes. We
use the ``$f$ after $g$'' order-convention for composition $f\circ g$
of functions, \ie $(f\circ g)(x)=f(g(x))$; and between sets of functions,
$\circ$ is understood as:
\begin{equation*}
  H\circ G \defeq \{\, h\circ g : h\in H \text{ and }g\in G\,\}.
\end{equation*}

The \emph{squared Euclidean length} and the \emph{squared Minkowski
length} of spacetime point $(t,x,y,z)\in\realnumb^4$ are defined as
\begin{align*}
\sqEl{(t,x,y,z)}
& \defeq  t^2+x^2+y^2+z^2,\\
\sqMl{(t,x,y,z)}
& \defeq  t^2-x^2-y^2-z^2.
\end{align*}

A map $L\colon \realnumb^4\to\realnumb^4$ is called a \emph{Lorentz
transformation} if{}f it is a bijective linear transformation that
preserves the squared Minkowski length, \ie $\sqMl{L(\vec{p}\,)}=
\sqMl{\vec{p}}$ for every $\vec{p}\in\realnumb^4$. A map $P\colon
\realnumb^4\to\realnumb^4$ is called a \emph{Poincar\'e
transformation} if{}f $P=\tau\circ L$ for some translation $\tau$ and
Lorentz transformation $L$. A map $A\colon\realnumb^4\to\realnumb^4$
is called a \emph{Euclidean isometry} if{}f $A=\tau\circ L$ for some
translation $\tau$ and bijective linear transformation $L$ that
preserves squared Euclidean length, \ie $\sqEl{L(\vec{p}\,)}=
\sqEl{\vec{p}}$ for every $\vec{p}\in\realnumb^4$.

Straight line $\ell$ is called \emph{vertical}, if it is parallel to
the time axis, \ie
\[
\ell=\{( t,x,y,z): t\in\realnumb\}\ \text{ for some
$x,y,z\in\realnumb$.}
\]

We call a map $A\colon \realnumb^4\to\realnumb^4$ a
\emph{trivial transformation} if it is a Euclidean isometry that
takes vertical lines to vertical lines.

The \emph{time-component} of point $(p_1,p_2,p_3,p_4)\in \realnumb^4$ is
defined by
\begin{equation}
  (p_1,p_2,p_3,p_4)_t\defeq p_1.
\end{equation}
A map  $A\colon
\realnumb^4\to\realnumb^4$ is \emph{orthochronous} iff $A$ `does not
change' the direction of time, \ie $A(1,0,0,0)_t>A(0,0,0,0)_t$.  And
$A$ is  a \emph{scaling} iff there is a nonzero $a\in\realnumb$
such that $A(\vec{p}\,)=a\cdot\vec{p}$.

The sets of Poincar\'e transformations and orthochronous Poincar\'e
transformations are denoted by $\Poi$ and $\Poi^{\uparrow}$,
respectively.  The sets of trivial transformations and orthochronous
trivial transformations are denoted by $\Triv$ and $\Triv^{\uparrow}$.
The set of scalings is denoted by $\Scal$.

We note that $\Poi$, $\Poi^{\uparrow}$, $\Triv$, $\Triv^{\uparrow}$
and $\Scal$ all form groups under composition.

\begin{theorem}[Alexandrov--Zeeman]
\label{spacetime-groups-a}
$\aut{\RelST}=\Scal\circ\Poi$.
\end{theorem}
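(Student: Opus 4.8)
The plan is to prove the Alexandrov--Zeeman theorem characterising the automorphism group of $\RelST=\la\realnumb^4,\llcon\ra$. The strategy has two directions. The easy direction is to show $\Scal\circ\Poi\subseteq\aut{\RelST}$: since automorphisms of a purely relational structure are exactly the bijections respecting $\llcon$, I only need to check that every Poincar\'e transformation and every scaling respects lightlike-relatedness. For a Poincar\'e transformation $P=\tau\circ L$, the translation $\tau$ leaves all differences $\vec{p}-\vec{q}$ invariant and $L$ preserves the squared Minkowski length, so $\sqMl{\vec{p}-\vec{q}}=0$ is preserved; since $\vec{p}\,\llcon\,\vec{q}\iff\sqMl{\vec{p}-\vec{q}}=0$, the relation is respected. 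A scaling $\vec{p}\mapsto a\vec{p}$ multiplies $\sqMl{\vec{p}-\vec{q}}$ by $a^2$, so it too preserves the zero-set. Composition then gives the inclusion.

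The hard direction, $\aut{\RelST}\subseteq\Scal\circ\Poi$, is the substance of the Alexandrov--Zeeman theorem and the main obstacle. Here I would take an arbitrary bijection $f\colon\realnumb^4\to\realnumb^4$ that respects $\llcon$ and must deduce that $f$ is the composite of a scaling with a Poincar\'e transformation. The classical route proceeds as follows. First, after composing $f$ with a translation so that $f$ fixes the origin, I would show that $f$ maps light cones to light cones and, crucially, maps lines that are \emph{not} lightlike-related (spacelike and timelike separations) in a controlled way; the key combinatorial lemma is that the lightlike structure alone determines which pairs of points are \emph{collinear}, because a line is lightlike exactly when all its pairs are $\llcon$-related, and ordinary (non-lightlike) lines can be recovered as intersections or limits of families of light rays. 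Thus $f$ preserves collinearity, and the fundamental theorem of affine geometry (for $\realnumb^4$, using that $\realnumb$ admits no nontrivial field automorphisms) forces $f$ to be an affine map, hence $f=L$ for some bijective linear $L$ once the translation is stripped off.

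It then remains to identify which linear maps preserve the relation $\llcon$. The condition is that $L$ preserves the null cone of the Minkowski quadratic form $Q(\vec{p}\,)=\sqMl{\vec{p}}$, \ie $Q(\vec p)=0\iff Q(L\vec p)=0$. A standard argument in quadratic-form theory shows that a linear bijection preserving the null cone of a nondegenerate quadratic form of signature $(1,3)$ must preserve $Q$ up to a nonzero scalar factor: there is $\lambda\neq 0$ with $Q(L\vec p)=\lambda\,Q(\vec p)$ for all $\vec p$. Writing $\lambda=a^2$ or $\lambda=-a^2$ and factoring out the scaling $\vec p\mapsto |a|\,\vec p$, I reduce to a linear map preserving $Q$ exactly, which is by definition a Lorentz transformation; the sign issue (whether $\lambda>0$) is handled by noting that signature $(1,3)\neq(3,1)$, so no null-cone-preserving map can send $Q$ to $-Q$ unless absorbed into the scalar. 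Hence $f=\tau\circ(\text{scaling})\circ L$, which rearranges into the form $\Scal\circ\Poi$ since scalings commute appropriately with translations and Lorentz maps up to adjusting the translation part. The genuinely delicate points are the recovery of affine structure from the lightlike relation and the scalar-multiple rigidity of the quadratic form; both are the heart of Alexandrov--Zeeman, and for a self-contained treatment I would either reproduce Zeeman's causal-automorphism argument or cite it directly.
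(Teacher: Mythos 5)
Your proposal is correct and takes essentially the same approach as the paper: the paper likewise treats the inclusion $\Scal\circ\Poi\subseteq\aut{\RelST}$ as straightforward and invokes the Alexandrov--Zeeman theorem for the hard inclusion $\aut{\RelST}\subseteq\Scal\circ\Poi$, which is precisely the classical argument (recovering collinearity from light rays, applying the fundamental theorem of affine geometry, then the scalar rigidity of the Minkowski form on its null cone) that you sketch and, at the genuinely delicate steps, explicitly defer to the cited proof. Your additional details check out --- in particular the signature argument forcing $\lambda>0$ is sound, and no orthochrony issue arises because the paper's $\Poi$ already includes time-reversing and improper transformations.
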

\begin{proof}
The Alexandrov--Zeeman theorem \cite{Alexandrov} tells us that $\aut{\RelST} \subseteq \Scal\circ\Poi$. The reverse inclusion is straightforward.
\end{proof}

We are now going to determine $\aut{\LclST}$ and for this
we will state and prove several lemmas and propositions.

\begin{prop}
\label{fundamental}
Assume $\geometry$ is a coordinate geometry.
Then every element  of $\aut{\geometry}$ 
is a composition of a bijective linear transformation and a 
translation.
\end{prop}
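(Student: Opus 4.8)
The plan is to reduce the statement to a classical fact of affine geometry. First I would observe that, since $\geometry$ is a coordinate geometry, the collinearity relation $\Col$ is definable in $\geometry$; because every automorphism respects every definable relation (for any formula $\varphi$ and automorphism $f$ one has $\geometry\models\varphi[\vec a]\iff\geometry\models\varphi[f(\vec a)]$), each $f\in\aut{\geometry}$ must respect $\Col$. Thus $f$ is a bijection of $\realnumb^4$ that carries collinear triples to collinear triples and conversely --- a \emph{collineation} --- and the proposition reduces to the purely geometric claim that every collineation of $\realnumb^4$ is affine. To simplify, I would translate the problem to the origin: writing $\tau_{\vec b}$ for the translation $\vec x\mapsto\vec x+\vec b$ and setting $g\defeq\tau_{-f(\vec 0)}\circ f$, we get $g(\vec 0)=\vec 0$; since translations are affine collineations and the composition of collineations is a collineation, $g$ is again a collineation. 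It then suffices to prove that $g$ is a bijective linear transformation, for then $f=\tau_{f(\vec 0)}\circ g$ exhibits $f$ as a composition of a bijective linear transformation and a translation.

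Next I would establish the linearity of $g$ by following the argument of the Fundamental Theorem of Affine Geometry, which applies because the dimension $4$ is at least $2$. The steps are: (i) show that $g$ preserves the family of affine subspaces and the parallelism of lines --- as a bijection with $g(\text{line})=\text{line}$ it preserves incidence, and a standard argument sends planes to planes, so coplanar disjoint (parallel) lines go to parallel lines; (ii) invoke the parallelogram rule to deduce additivity, $g(\vec u+\vec v)=g(\vec u)+g(\vec v)$, first for linearly independent $\vec u,\vec v$ and then for all pairs by passing through an auxiliary independent vector; and (iii) produce a map $\sigma\colon\realnumb\to\realnumb$ with $g(\lambda\vec v)=\sigma(\lambda)\,g(\vec v)$ and check that additivity, together with preservation of the scalar structure, forces $\sigma$ to be a field automorphism of $\realnumb$.

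Finally I would invoke the rigidity of the reals: the field $\realnumb$ has no nontrivial automorphism, since any $\sigma$ fixes $\mathbb{Q}$ pointwise, maps squares to squares, and hence preserves the order (because $x\le y$ holds exactly when $y-x$ is a square), and an order-preserving field automorphism fixing the dense subfield $\mathbb{Q}$ must be the identity. Hence $\sigma=\mathrm{id}$, so $g$ is genuinely $\realnumb$-linear; being bijective, it is a bijective linear transformation, and the reduction above finishes the proof. I expect the main obstacle to be steps (ii)--(iii): extracting additivity and the scalar field automorphism from the single hypothesis that $g$ preserves collinearity is precisely the substantive content of the Fundamental Theorem of Affine Geometry and depends essentially on the dimension exceeding $1$. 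By contrast, the translation to the origin and the rigidity of $\realnumb$ are routine.
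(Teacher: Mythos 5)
Your proof is correct and takes essentially the same route as the paper: both reduce the claim to the observation that every automorphism respects the definable relation $\Col$ and is therefore a collineation of $\realnumb^4$, and then rest on the classical Fundamental Theorem of Affine Geometry together with the rigidity of $\realnumb$. The only difference is that the paper simply cites this classical result, whereas you sketch its standard proof (parallelism, additivity via the parallelogram rule, semilinearity, and the fact that $\realnumb$ admits no nontrivial field automorphism), correctly noting the dimension hypothesis $\geq 2$ that makes it apply.
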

\begin{proof}
Every automorphism of $\geometry$ is a bijection that respects $\Col$
since $\Col$ is definable in it; and it is known that any such
bijection has to be a linear transformation composed with a
translation, see \eg \cite[Thm.2, p.40]{Ryan86} or \cite[Exercise
  I.51, p.41]{Audin03}.
\end{proof}

The following relations and geometries are taken from our paper \cite{part2}. There we considered geometries of arbitrary dimension $d \geq 2$, defined over arbitrary ordered fields. The definitions and results stated here are specialized to $\realnumb^4$.

The binary relation $\rest$, 4-ary relation $\cng{}$ of 
Euclidean congruence and ternary relation
$\Bw$ of betweenness on points of $\realnumb^4$ are defined as
follows:
\begin{align*}
\nonumber
(t,x,y,z)\restrel (t',x',y',z')\quad\defiff\quad & (x,y,z)=(x',y',z'),\\
\nonumber
(\vec{p},\vec{q}\,)\cngrel{} (\vec{r},\vec{s}\,)\quad \defiff\quad & 
\sqEl{\vec{p}-\vec{q}}=\sqEl{\vec{r}-\vec{s}},\\
\label{Bw-def}
\Bw(\vec{p},\vec{q},\vec{r}\,)\quad \defiff\quad & 
\vec{q}=\vec{p}+a(\vec{r}-\vec{p}\,)\ \text{ for some }\ 
a\in [0,1] \subseteq \realnumb.
\end{align*}

We also recall two 
geometries from \cite{part2}, where the definitions and results are again specialized to $\realnumb^4$.

\begin{align*}
\euclg  &\ \defeq \ \la \realnumb^4,\cng{} ,\Bw\ra,\\
\relst  &\ \defeq \ \la \realnumb^4,\llcon,\Bw \ra.
\end{align*}

The following proposition connects the geometries $\euclg$ and $\relst$ to the spacetimes $\RelST$ and $\LclST$ used in the present paper.

\begin{prop}${}$
\label{connection}
\begin{enumerate}[(i)]
\item
\label{c1}
$\RelST\defequiv\relst$.
\item
\label{c2}
$\LclST\defequiv\la\euclg,\rest\ra$.
\end{enumerate}
\end{prop}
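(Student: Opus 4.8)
The plan is to prove both equivalences by exhibiting mutual definability of the ``extra'' relations, using that each model involved is an FFD coordinate geometry (Proposition~\ref{spacetimes-are-geometries}; $\relst$ and $\la\euclg,\rest\ra$ are FFD for the same reasons). For part~(\ref{c1}), the inclusion $\ca{\RelST}\subseteq\ca{\relst}$ is immediate since $\relst$ merely expands $\RelST$, so it suffices to show $\Bw\in\ca{\RelST}$. I would get this from Theorem~\ref{THM1}: $\Bw$ is field-definable (a one-line formula, as in Proposition~\ref{prop-fd}), and by Alexandrov--Zeeman (Theorem~\ref{spacetime-groups-a}) every element of $\aut{\RelST}=\Scal\circ\Poi$ is affine, hence preserves betweenness; so $\Bw$ is closed under $\aut{\RelST}$ and therefore lies in $\ca{\RelST}$. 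This gives $\ca{\relst}\subseteq\ca{\RelST}$, and hence part~(\ref{c1}).

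For part~(\ref{c2}) I would prove $\ca{\LclST}=\ca{\la\euclg,\rest\ra}$ by showing each generating relation of one side is definable in the other. The inclusion $\ca{\LclST}\subseteq\ca{\la\euclg,\rest\ra}$ is the easy direction: $\rest$ pins down the time-axis direction, Euclidean orthogonality is definable from $\cng{}$ and $\Bw$ by standard Tarski-style constructions, and then $\vec p\abstimerel\vec q$ holds exactly when $\vec p\vec q$ is orthogonal to that direction, while $\vec p\llcon\vec q$ holds exactly when the vertical and horizontal ``legs'' of $\vec p\vec q$ (obtained by dropping a perpendicular to the vertical line through $\vec q$) are Euclidean-congruent, since lightlikeness is the condition $\Delta t^2=|\Delta\vec x|^2$. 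For the reverse inclusion, $\Bw\in\ca{\LclST}$ follows from part~(\ref{c1}) (as $\llcon$ is present in $\LclST$), and $\rest\in\ca{\LclST}$ because the vertical line through a point is the symmetry axis of its light cone: $\vec p\restrel\vec q$ holds iff $\vec p=\vec q$, or $\vec p\not\abstimerel\vec q$ and $\vec q$ is the centre of the sphere $\{\vec s:\vec p\llcon\vec s\text{ and }\vec s\abstimerel\vec q\}$, this centre being definable affinely from $\Bw$ as the unique point about which that sphere is symmetric under point reflection.

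The main obstacle is showing $\cng{}\in\ca{\LclST}$, \ie recovering full $4$-dimensional Euclidean congruence from simultaneity and the light cone alone. My plan is to reduce it to Minkowski congruence, which is available: writing $(\vec p,\vec q)\cngrel{\mu}(\vec r,\vec s)\defiff\sqMl{\vec p-\vec q}=\sqMl{\vec r-\vec s}$, the relation $\cng{\mu}$ is field-definable and closed under $\aut{\RelST}=\Scal\circ\Poi$ (Poincar\'e maps preserve $\sqMl{\cdot}$ and scalings rescale it by a common factor), so $\cng{\mu}\in\ca{\RelST}\subseteq\ca{\LclST}$ by Theorem~\ref{THM1}. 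Given a segment, its vertical leg (size $V$, with $\sqMl{}=V^2$) and horizontal leg (size $H$, with $\sqMl{}=-H^2$) are definable via $\rest$, $\abstime$ and $\Bw$. The key identity is that, for $\vec u=\vec p-\vec q$ and $\vec v=\vec r-\vec s$ with leg sizes $(V_u,H_u)$ and $(V_v,H_v)$,
\[
\sqEl{\vec u}=\sqEl{\vec v}\iff V_u^2-H_v^2=V_v^2-H_u^2,
\]
and the right-hand side is exactly $\cng{\mu}$ applied to two ``mixed'' corner segments, one built with vertical size $V_u$ and horizontal size $H_v$, the other with vertical size $V_v$ and horizontal size $H_u$. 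Such corner segments are constructible by existentially placing a vertical segment of the prescribed time-size followed by a perpendicular horizontal segment of the prescribed space-size (matching sizes through $\cng{\mu}$ restricted to vertical, resp.\ horizontal, segments); their $\sqMl{\cdot}$ depends only on the two sizes, so the construction is well defined and choice-independent. This yields an explicit first-order definition of $\cng{}$ in $\LclST$, completing the reverse inclusion and hence part~(\ref{c2}).

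Finally, I would remark that an alternative route to part~(\ref{c2}) is to compute $\aut{\LclST}$ directly: any automorphism lies in $\aut{\RelST}=\Scal\circ\Poi$, and preserving $\abstime$ forces its Lorentz part to fix the simultaneity hyperplane setwise, hence to be block-diagonal orthogonal, so $\aut{\LclST}=\Scal\circ\Triv=\aut{\la\euclg,\rest\ra}$ (the similarities preserving vertical lines, by the results of \cite{part2}); Theorem~\ref{thm-erlangen}(ii) then gives the equivalence at once. I regard the definability argument above as preferable here precisely because it does not presuppose knowledge of $\aut{\LclST}$, which is what one then wants to read off from this proposition.
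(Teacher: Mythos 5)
Your proposal is correct, and for part~(\ref{c2}) it takes a genuinely different route from the paper. For part~(\ref{c1}) you do essentially what the paper does: field-definability of $\Bw$ (the square trick for ``$a\in[0,1]$'') plus closure under automorphisms plus Theorem~\ref{THM1}; the one difference is that you invoke Alexandrov--Zeeman to see that automorphisms of $\RelST$ are affine, where the paper gets this more cheaply from Proposition~\ref{fundamental}, which holds for every coordinate geometry. For part~(\ref{c2}) the paper constructs no definitions at all: it quotes \cite[Thm.3.2.2 and Thm.3.2.3]{part2} as a black box to get $\la\relst,\abstime\ra\defequiv\la\euclg,\rest\ra$ and transfers this along part~(\ref{c1}). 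You instead prove mutual definability by hand, and your argument is sound: the easy direction (defining $\abstime$ and $\llcon$ from $\cng{}$, $\Bw$, $\rest$ via orthogonality and congruent legs) is standard; $\rest\in\ca{\LclST}$ via the symmetry centre of the sphere cut from a light cone works, granting that midpoints are affinely definable from $\Bw$ in dimension $\geq 2$ (parallelogram construction); and your key step for $\cng{}\in\ca{\LclST}$ --- noting $\cng{\mu}\in\ca{\RelST}$ by Theorem~\ref{THM1} and using the identity $V_u^2+H_u^2=V_v^2+H_v^2 \Longleftrightarrow V_u^2-H_v^2=V_v^2-H_u^2$ to test Euclidean congruence through Minkowski congruence of mixed corner segments --- is a correct and rather elegant trick, though a full write-up should spell out the midpoint construction and check the degenerate cases $V=0$ or $H=0$ (which your existential construction does in fact accommodate). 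What each approach buys: the paper's proof is a few lines and inherits the generality of \cite{part2}, where the equivalence is established for arbitrary dimension $d\geq 2$ over arbitrary ordered fields; yours is self-contained within the present paper's toolkit (only \cite{part1} via Theorem~\ref{THM1}), at the cost of synthetic bookkeeping. Your closing remark is also well judged: computing $\aut{\LclST}$ first would be circular in the paper's architecture, since Theorem~\ref{spacetime-groups-b} is itself derived from this proposition.
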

\begin{proof}
\underline{\eqref{c1}} We only have to prove that $\Bw$ is definable
in $\relst$.  This is easy to see, because the constraint ``$a\in [0 , 1]\subseteq\realnumb$'' in the definition of $\Bw$ can be
replaced by the equivalent statement ``$\exists b\, \exists c
\left(a=b^2\land (1-a)=c^2\right)$''.  Therefore, $\Bw$ is
field-definable.
 
By Propositions~\ref{spacetimes-are-geometries} and \ref{fundamental},
every automorphism of $\RelST$ is a bijective linear transformation
composed with a translation, and it is easy to see that $\Bw$ is
closed under linear transformations and translations. Therefore, $\Bw$
is closed under automorphisms of $\RelST$.  Applying
Theorem~\ref{THM1}, we see that $\Bw$ is definable in $\RelST$, hence
$\RelST\defequiv\relst$.

\underline{\eqref{c2}} By \cite[Thm.3.2.2]{part2}, we have
that $\abstime\not\in\ca{\relst}$ and $\rest\not\in\ca{\euclg}$.
Therefore, by \cite[Thm.3.2.3]{part2}, we have that
$\la\relst,\abstime\ra\defequiv\la\euclg,\rest\ra$.  By item
\eqref{c1}, we have that $\la \relst,\abstime\ra\defequiv \la
\RelST,\abstime\ra$. But $\la\RelST,\abstime\ra$ and $\LclST$  are essentially the
same, the only difference being the order in which the relations are listed.
Therefore, $\LclST\defequiv\la\euclg,\rest\ra$.
\end{proof}

\begin{theorem}\label{spacetime-groups-b}
$\aut{\LclST}=\Scal\circ\Triv$.
\end{theorem}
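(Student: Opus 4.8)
The plan is to reduce the computation of $\aut{\LclST}$ to a purely Euclidean one. By Proposition~\ref{connection}, item~\eqref{c2}, we have $\LclST\defequiv\la\euclg,\rest\ra$, and definitionally equivalent models share the same automorphisms: an automorphism of a relational model is exactly a bijection of its universe that respects every concept (it respects the basic relations, and automorphisms preserve all first-order definable relations; conversely a bijection respecting all concepts in particular respects the basic relations), so $\aut{\Model}$ is determined by $\ca{\Model}$. Since the universes coincide and $\ca{\LclST}=\ca{\la\euclg,\rest\ra}$, we obtain $\aut{\LclST}=\aut{\la\euclg,\rest\ra}$; alternatively, both models are FFD coordinate geometries, so this also follows from Theorem~\ref{thm-erlangen}\,(ii). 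It therefore suffices to show $\aut{\la\euclg,\rest\ra}=\Scal\circ\Triv$.

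First I would fix the shape of an arbitrary $f\in\aut{\la\euclg,\rest\ra}$. As $\Col$ is first-order definable from $\Bw$ (a triple is collinear iff one of its points lies between the other two), $\la\euclg,\rest\ra$ is a coordinate geometry, so Proposition~\ref{fundamental} gives $f(\vec p\,)=L\vec p+\vec b$ with $L$ a bijective linear map and $\vec b\in\realnumb^4$.

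The geometric heart of the argument is to extract the similarity structure from $\cng{}$. Since $f$ respects $\cng{}$ and the translation cancels in differences, $L$ preserves equality of squared Euclidean length, i.e.\ $\sqEl{\vec u}=\sqEl{\vec w}\iff\sqEl{L\vec u}=\sqEl{L\vec w}$ for all $\vec u,\vec w\in\realnumb^4$. Fixing a unit vector $\vec e$ and putting $\lambda\defeq\|L\vec e\,\|>0$, every unit vector is then sent to a vector of length $\lambda$, so $\lambda^{-1}L$ maps the unit sphere into itself and hence, being linear, preserves all Euclidean lengths; thus $\lambda^{-1}L$ is orthogonal and $f$ is a Euclidean similarity. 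Conversely every Euclidean similarity respects $\cng{}$ (it scales all distances by one common factor) and $\Bw$ (it is affine), so $\aut{\euclg}$ is precisely the set of Euclidean similarities. I expect this equinorm-to-orthogonal step to be the main obstacle, although it is routine and could instead be quoted from \cite{part2}.

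Finally I would impose $\rest$. Since $\vec p\restrel\vec q$ says exactly that $\vec p$ and $\vec q$ lie on a common vertical line, a bijection respects $\rest$ iff it carries vertical lines onto vertical lines. Writing a similarity as $f=S_\lambda\circ A$, where $S_\lambda(\vec p\,)=\lambda\vec p$ lies in $\Scal$ and $A$ is a Euclidean isometry, the scaling $S_\lambda$ always preserves the vertical direction, so $f$ preserves vertical lines iff $A$ does, that is, iff $A\in\Triv$. Hence the similarities respecting $\rest$ are exactly the maps $S_\lambda\circ A$ with $\lambda\neq 0$ and $A\in\Triv$, giving $\aut{\la\euclg,\rest\ra}=\Scal\circ\Triv$. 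Together with the reduction above this yields $\aut{\LclST}=\Scal\circ\Triv$, as required.
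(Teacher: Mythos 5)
Your proof is correct and takes essentially the same route as the paper: both reduce to $\la\euclg,\rest\ra$ via Proposition~\ref{connection}\eqref{c2} together with the coincidence of automorphism groups under definitional equivalence, apply Proposition~\ref{fundamental} to obtain the affine form $\tau\circ L$, normalize $L$ by the length of the image of a unit vector so that respecting $\cng{}$ forces the normalized map to be a Euclidean isometry, and use $\rest$ to get preservation of vertical lines. The only cosmetic difference is that you absorb the translation into the isometry factor at the outset, whereas the paper commutes it past the scaling at the end ($\tau\circ D=D\circ\tau'$).
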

\begin{proof}
By Proposition~\ref{connection}\eqref{c2}, we have
that $\LclST\defequiv\la\realnumb^4,\cng{},\rest,\Bw\ra$,
and so 
\begin{equation}\label{eq-valami}
\aut{\LclST}=
\aut{\la\realnumb^4,\cng{},\rest,\Bw\ra}
\end{equation}
because automorphisms of definitionally equivalent models
coincide.

We will prove that 
\[
\aut{\la\realnumb^4,\cng{},\rest,\Bw\ra}
=\Scal\circ\Triv.
\]
It is easy to check
that all maps in $\Triv$ and $\Scal$ respect $\cng{}$, $\rest$ and $\Bw$, whence 
$\Scal\subseteq\aut{\la\realnumb^4,\cng{},\rest,\Bw\ra}$
and $\Triv\subseteq\aut{\la\realnumb^4,\cng{},\rest,\Bw\ra}$. So
$\Scal\circ\Triv\subseteq
\aut{\la\realnumb^4,\cng{},\rest,\Bw\ra}$.

To prove the reverse inclusion, suppose
$A\in\aut{\la\realnumb^4,\cng{},\rest,\Bw\ra}$.  By \eqref{eq-valami},
Propositions~\ref{fundamental} and \ref{spacetimes-are-geometries},
there is a translation $\tau$ and bijective linear transformation $L$
such that $A=\tau\circ L$.  Let
\[
\egyseg\defeq (1,0,0,0)  \quad\text{ and }\quad  \orrigo\defeq (0,0,0,0)
\]
and define $\|\vec{p}\,\| \defeq\sqrt{\sqEl{\vec{p}}}$ for $\vec{p} \in \realnumb^4$.

Observe that $L(\egyseg\,) \neq \orrigo$ because $L$ is bijective and linear. 
So we can safely define a map $T\colon\realnumb^4\rightarrow\realnumb^4$
by $T(\vec{p}\,)=\frac{1}{\|L(\egyseg\,)\|}L(\vec{p}\,)$,
and let $D$ be a scaling with factor $\|L(\egyseg\,)\|$,
\ie $D(\vec{p}\,)=\|L(\egyseg\,)\|\vec{p}$. Then
$T$ is a bijective linear transformation and $L=D\circ T$. 
So $A=\tau\circ D\circ T$.
As $A,\tau,D\in\aut{\la\realnumb^4,\cng{},\rest,\Bw\ra}$, 
we have that 
\[
T\in\aut{\la\realnumb^4,\cng{},\rest,\Bw\ra}.
\]
We will prove that $T\in\Triv$. First we will prove that
$T$ is a Euclidean isometry, by proving that
for every $\vec{p}\in\realnumb^4$,
$\|T(\vec{p}\,)\|=\|\vec{p}\,\|$. 
If $\vec{p} = \orrigo$, this is clearly true because $T(\orrigo\,) = \orrigo$, 
so suppose  that $\vec{p} \neq \orrigo$.
Clearly, 
$(\egyseg,\orrigo\,)\cngrel{}\left(\frac{1}{\|\vec{p}\,\|}\vec{p},
\orrigo\right)$.
As $T$ respects $\cng{}$, we have that
$(T(\egyseg\,),T(\orrigo\,))\cngrel{}\left(T\left(\frac{1}{\|\vec{p}\,\|}\vec{p}\right)
, T(\orrigo\,)\right)$. This implies by linearity that
$(T(\egyseg\,),\orrigo\,)\cngrel{}\left(\frac{1}{\|\vec{p}\,\|}
T(\vec{p}\,),\orrigo\,)\right)$. Then by definition of
$\cng{}$, we have that 
$\|T(\egyseg\,)\|= \|T(\egyseg\,)-\orrigo\,\|=\|\frac{1}{\|\vec{p}\,\|}
T(\vec{p}\,)-\orrigo\,\|=\frac{\|T(\vec{p}\,)\|}{\|\vec{p}\,\|}$.
On the other hand $\|T(\egyseg\,)\|
=\left\|\frac{1}{\|L(\egyseg\,)\|}L(\egyseg\,)\right \|
=1$. Therefore, $\frac{\|T(\vec{p}\,)\|}{\|\vec{p}\,\|}=1$, \ie
$\|T(\vec{p}\,)\|=\|\vec{p}\,\|$, and this implies that
$T$ is a Euclidean isometry. Since the bijective linear transformation
 $T$ respects $\rest$, it also takes vertical lines to vertical
lines. Thus $T\in\Triv$ as claimed.

Finally, let $\tau'$ be the translation
with vector $D^{-1}\left(\tau(\orrigo\,)\right)$. Then it is easy
to check that $\tau\circ D=D\circ\tau'$. Since $\tau'$ and $T$ are both trivial transformations, so is $\tau'\circ T$. So $A=\tau\circ D\circ T=
D\circ(\tau'\circ T)\in\Scal\circ\Triv$. 
\end{proof}

By Theorems~\ref{spacetime-groups-a} and \ref{spacetime-groups-b},
Proposition~\ref{prop-tartalmazas} and 
Theorem~\ref{thm-erlangen}\eqref{item-proper}, we have the following.
\begin{corollary}
\label{csop-cor}
$\Scal\circ\Triv\subset
\Scal\circ\Poi$ and both
$\Scal\circ\Poi$ and $\Scal\circ\Triv$ form 
groups under composition. \qed
\end{corollary}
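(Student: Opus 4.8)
The plan is to read off both assertions directly from the explicit descriptions of the two automorphism groups, combined with the concept--automorphism correspondence established earlier. The only genuine ingredients are Theorems~\ref{spacetime-groups-a} and~\ref{spacetime-groups-b}, which identify $\aut{\RelST}$ with $\Scal\circ\Poi$ and $\aut{\LclST}$ with $\Scal\circ\Triv$, together with Theorem~\ref{thm-erlangen}, which translates a strict inclusion of concept sets into a strict reverse inclusion of automorphism groups.

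For the group claim, I would first observe that for \emph{any} model $\Model$ the set $\aut{\Model}$ is automatically a group under composition: it contains the identity map, it is closed under composition of automorphisms, and it is closed under taking inverses of automorphisms. Applying this to $\RelST$ and to $\LclST$, and then invoking Theorems~\ref{spacetime-groups-a} and~\ref{spacetime-groups-b}, shows that $\Scal\circ\Poi$ and $\Scal\circ\Triv$, being equal to these two automorphism groups, are themselves groups under composition. (This is the substantive point: although $\Scal$, $\Poi$ and $\Triv$ are individually noted to be groups, what we now conclude is that the particular products $\Scal\circ\Poi$ and $\Scal\circ\Triv$ are groups, which we get for free from their being automorphism groups.)

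For the strict inclusion, I would start from $\ca{\RelST}\subset\ca{\LclST}$, which is exactly Proposition~\ref{prop-tartalmazas}. By Proposition~\ref{spacetimes-are-geometries}, both $\RelST$ and $\LclST$ are FFD coordinate geometries, so Theorem~\ref{thm-erlangen}\eqref{item-proper} applies and yields $\aut{\RelST}\supset\aut{\LclST}$. Substituting the descriptions of these groups from Theorems~\ref{spacetime-groups-a} and~\ref{spacetime-groups-b} gives $\Scal\circ\Poi\supset\Scal\circ\Triv$, that is, $\Scal\circ\Triv\subset\Scal\circ\Poi$, as required. The content of this corollary lies entirely in the results just cited, so I do not expect any real obstacle; the one point requiring care is to confirm the FFD hypothesis before invoking the correspondence of Theorem~\ref{thm-erlangen}\eqref{item-proper}, and this is precisely what Proposition~\ref{spacetimes-are-geometries} supplies.
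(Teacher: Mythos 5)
Your proposal is correct and takes essentially the same route as the paper, which states the corollary with precisely the citations you invoke—Theorems~\ref{spacetime-groups-a} and~\ref{spacetime-groups-b}, Proposition~\ref{prop-tartalmazas} and Theorem~\ref{thm-erlangen}\eqref{item-proper}—the group claim following because both sets coincide with automorphism groups of models. Your explicit verification of the FFD hypothesis via Proposition~\ref{spacetimes-are-geometries} before applying Theorem~\ref{thm-erlangen} is a detail the paper leaves implicit, but it is the same argument.
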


\begin{prop} 
\label{prop-triv}
$\Triv\subset\Poi$
and 
$\Triv^{\uparrow}\subset\Poi^{\uparrow}$.
\end{prop}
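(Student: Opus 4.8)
The plan is to reduce each inclusion to a statement about the linear part of a trivial transformation. I would first note that if $A\in\Triv$, then by definition $A=\tau\circ L$ for some translation $\tau$ and some bijective linear $L$ preserving squared Euclidean length. Since translations and their inverses carry vertical lines to vertical lines, the requirement that $A$ take vertical lines to vertical lines passes to $L$. To conclude $A\in\Poi$ it therefore suffices to show that $L$ is a Lorentz transformation, since then $A=\tau\circ L$ is Poincar\'e by definition.

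The heart of the argument is to pin down the shape of $L$. Writing $\egyseg=(1,0,0,0)$ and $\orrigo=(0,0,0,0)$, the time axis $T=\{a\egyseg:a\in\realnumb\}$ is a vertical line through $\orrigo$; as $L$ is linear with $L(\orrigo)=\orrigo$, it sends $T$ to a vertical line through $\orrigo$, which can only be $T$ itself, so $L(\egyseg)=\lambda\egyseg$ for some $\lambda\in\realnumb$, and $\sqEl{L(\egyseg)}=\sqEl{\egyseg}=1$ forces $\lambda=\pm1$. Next, because $L$ preserves squared Euclidean length it preserves the associated inner product (by polarization), so it maps the spatial hyperplane $H=\egyseg^{\perp}=\{(0,x,y,z):x,y,z\in\realnumb\}$ into $L(\egyseg)^{\perp}=H$. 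Consequently, decomposing any point as $\vec p=t\,\egyseg+\vec s$ with $\vec s\in H$, we get $L(\vec p)=\pm t\,\egyseg+L(\vec s)$ with $L(\vec s)\in H$, and a direct computation gives $\sqMl{L(\vec p\,)}=t^2-\sqEl{L(\vec s\,)}=t^2-\sqEl{\vec s}=\sqMl{\vec p}$, using that $L$ preserves squared Euclidean length on $H$. Thus $L$ is a Lorentz transformation and $A\in\Poi$, establishing $\Triv\subseteq\Poi$.

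For the orthochronous version, I would observe that for $A=\tau\circ L$ the condition $A(\egyseg\,)_t>A(\orrigo\,)_t$ reduces to $L(\egyseg\,)_t>0$, hence to $\lambda=1$ in the analysis above; in that case $L(\egyseg\,)_t=1>0$, so the Lorentz transformation $L$ is itself orthochronous and $A\in\Poi^{\uparrow}$, giving $\Triv^{\uparrow}\subseteq\Poi^{\uparrow}$. Finally, for strictness of both inclusions I would exhibit a single witness: any Lorentz boost with nonzero speed is an orthochronous Poincar\'e transformation that moves $\egyseg$ off the time axis and fails to preserve squared Euclidean length, so it lies in $\Poi^{\uparrow}\subseteq\Poi$ but is not a Euclidean isometry and hence not trivial. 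This yields $\Triv\subset\Poi$ and $\Triv^{\uparrow}\subset\Poi^{\uparrow}$.

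I expect the only delicate step to be the middle paragraph: deducing from ``preserves vertical lines'' together with ``Euclidean isometry'' that $L$ acts as $\pm1$ on the time axis and orthogonally on the spatial hyperplane $H$. Once this block structure is secured, the Minkowski-length computation and the orthochronicity bookkeeping are both routine.
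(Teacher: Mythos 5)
Your proof is correct, but it takes a genuinely different route from the paper's. The paper deduces $\Triv\subseteq\Poi$ from Corollary~\ref{csop-cor} (which rests on the Alexandrov--Zeeman theorem via Theorem~\ref{spacetime-groups-a}, together with Theorems~\ref{spacetime-groups-b} and \ref{thm-erlangen}): given $T=\tau\circ L\in\Triv$, it uses $\Triv\subseteq\Scal\circ\Triv\subset\Scal\circ\Poi$ to decompose the linear part as $L=D\circ P$ with $D\in\Scal$ and $P\in\Poi$, shows $P(\egyseg\,)=\pm\egyseg$ and $L(\egyseg\,)=\pm\egyseg$ from preservation of the squared Minkowski and Euclidean lengths respectively, and concludes that the scaling factor of $D$ is $\pm1$, so $D\in\Poi$ and hence $L=D\circ P\in\Poi$. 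You instead argue directly and self-containedly: a linear Euclidean isometry taking vertical lines to vertical lines must map the time axis to itself (so $L(\egyseg\,)=\pm\egyseg$), preserves the Euclidean inner product by polarization and therefore maps the spatial hyperplane $H=\egyseg^{\perp}$ into itself, and this block structure makes preservation of the squared Minkowski length a one-line computation. Your version avoids any appeal to Corollary~\ref{csop-cor}, and thus to the Alexandrov--Zeeman theorem and the definability machinery behind Theorem~\ref{thm-erlangen}, so it is purely linear-algebraic and logically independent of the rest of the paper's scaffolding; the paper's version is shorter given that the group-theoretic facts are already established and needed elsewhere. Your treatment of orthochronicity (reducing $A(\egyseg\,)_t>A(\orrigo\,)_t$ to $L(\egyseg\,)_t>0$) is fine, though even simpler: since orthochronicity is the same side condition in both definitions, $\Triv^{\uparrow}=\Triv\cap\{\text{orthochronous maps}\}\subseteq\Poi\cap\{\text{orthochronous maps}\}=\Poi^{\uparrow}$ follows at once from $\Triv\subseteq\Poi$. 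Your strictness witness (a Lorentz boost with nonzero speed) is exactly the paper's, and your justification that it is not trivial is sound.
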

\begin{proof}
By Corollary~\ref{csop-cor} and the fact that
the identity function on $\realnumb^4$ is a scaling,
we have that
\begin{equation}
\label{eee}
\Triv\subseteq\Scal\circ\Triv\subset\Scal\circ\Poi.
\end{equation}
Let
\[
\egyseg\defeq (1,0,0,0).
\]
To prove $\Triv\subseteq\Poi$, let $T\in\Triv$.  Then $T=\tau\circ L$
for some bijective linear transformation $L$ and translation
$\tau$. Clearly, $L\in\Triv$; and by \eqref{eee}, $L=D\circ P$ for
some $D\in\Scal$ and $P\in\Poi$. We also have that $P$ is a linear
transformation and that it maps vertical lines to vertical lines
because both $L$ and $D$ have these properties. As $P$ fixes the
origin, we have that $P(\egyseg\,)=(t,0,0,0)$ for some
$t\in\realnumb$.  We know that $P$ preserves the squared Minkowski
length, thus
$1=\sqMl{\egyseg}=\sqMl{P(\egyseg\,)}=\sqMl{(t,0,0,0)}=t^2$.
Therefore, $t=\pm 1$. This implies that either $P(\egyseg\,)=\egyseg$
holds or $P(\egyseg\,)=-\egyseg$ holds.  The same argument, this time
using the fact that $L$ preserves squared Euclidean length, shows that
either $L(\egyseg\,)=\egyseg$ or $L(\egyseg\,)=-\egyseg$.

Because $D\in\Scal$, $L(\egyseg\,)=\pm\egyseg$,
$P(\egyseg\,)=\pm\egyseg$ and $L=D\circ P$, we get that
the scaling factor of $D$ is either $1$ or $-1$, \ie
either $D(\vec{p}\,)=\vec{p}$ for every $\vec{p}$ or
$D(\vec{p}\,)=-\vec{p}$ for every $\vec{p}$. In both cases,
$D\in\Poi$. Therefore, $L=D\circ P\in\Poi$ as $\Poi$ forms a group.
As $\tau$ is a translation, we also have that
$\tau\in\Poi$. Thus $A=\tau\circ L\in\Poi$, and this
completes the proof that $\Triv\subseteq\Poi$.
This then implies easily that 
$\Triv^{\uparrow}\subseteq\Poi^{\uparrow}$. 

Taking the speed $c$ of light to be $1$, for any Lorentz boost $B$
with nonzero speed $v$, we have that $B\in\Poi^{\uparrow}$ but
$B\not\in\Triv^\uparrow$, and likewise $B\in\Poi$ and $B\not\in\Triv$.
Therefore, $\Triv^{\uparrow}\subset\Poi^{\uparrow}$ and
$\Triv\subset\Poi$ hold, as required.
\end{proof}

The following theorem is a special case of
\cite[Cor.4.9]{OnBorisov}, which is based on Borisov's theorem
\cite{Borisov1986}.

\begin{theorem}[Corollary of Borisov's theorem]${}$
\label{Borisov}
There is no set of functions $\group$ which forms a group under
composition satisfying
$\Triv^{\uparrow}\subset \group\subset
\Poi^{\uparrow}$. \qed
\end{theorem}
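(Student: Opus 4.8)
The plan is to reduce this statement about groups of affine transformations to a purely linear statement about subgroups of the orthochronous Lorentz group, and then to prove the required maximality there by hand. First I would observe that both $\Triv^{\uparrow}$ and $\Poi^{\uparrow}$ contain every translation of $\realnumb^4$: a translation is a Euclidean isometry, it preserves the direction of time, and it carries vertical lines to vertical lines, so it lies in $\Triv^{\uparrow}$; and trivially it lies in $\Poi^{\uparrow}$. Consequently any $\group$ with $\Triv^{\uparrow}\subseteq\group\subseteq\Poi^{\uparrow}$ contains the full translation group $\mathcal{T}\cong\realnumb^4$. Letting $\pi$ send an affine map to its linear part, we have $\mathcal{T}=\ker\pi$, and since $\mathcal{T}\subseteq\group$ this forces $\group=\pi^{-1}(\pi(\group))$. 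Thus intermediate groups $\group$ correspond bijectively, with strict inclusions preserved, to intermediate subgroups $H$ with $\pi(\Triv^{\uparrow})\subseteq H\subseteq\pi(\Poi^{\uparrow})$. A short computation (using orthochronicity to fix the time axis, so that $L\egyseg=\egyseg$) shows that the linear part of an orthochronous trivial transformation is exactly a map of the form $\mathrm{diag}(1,R)$ with $R\in O(3)$; hence $\pi(\Triv^{\uparrow})$ is a copy of $O(3)$, while $\pi(\Poi^{\uparrow})$ is the whole orthochronous Lorentz group $O^{\uparrow}(1,3)$. The theorem is therefore equivalent to the claim that $O(3)$ is a maximal subgroup of $O^{\uparrow}(1,3)$.

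To prove that maximality, I would suppose $O(3)\subsetneq H\subseteq O^{\uparrow}(1,3)$ and choose $g\in H\setminus O(3)$. Using the Cartan ($KAK$) decomposition of $O^{\uparrow}(1,3)$, write $g=k_1\,b\,k_2$ with $k_1,k_2\in O(3)$ and $b$ a boost along a fixed axis. Were $b$ trivial we would have $g=k_1k_2\in O(3)$, contrary to assumption; so $b$ is a nontrivial boost, and since $k_1,k_2\in O(3)\subseteq H$ we get $b=k_1^{-1}g\,k_2^{-1}\in H$. It therefore suffices to show that $O(3)$ together with a single nontrivial boost generates all of $O^{\uparrow}(1,3)$.

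For the generation step I would argue as follows. Conjugating $b$ by the rotations in $O(3)$ yields boosts of the same rapidity $\zeta$ in every spatial direction. The composite of two such boosts, in directions making angle $\theta$, is a spatial (Wigner) rotation $\rho\in SO(3)$ times a boost whose rapidity $\zeta'(\theta)$ varies continuously and sweeps the whole interval $[0,2\zeta]$ as $\theta$ runs from $\pi$ to $0$. Since $\rho\in SO(3)\subseteq H$ can be cancelled, $H$ contains boosts of every rapidity in $[0,2\zeta]$; composing collinear boosts, whose rapidities add, then produces boosts of all rapidities, and hence every boost. As every element of $SO^{\uparrow}(1,3)$ is a product of a boost and a rotation, this gives $SO^{\uparrow}(1,3)\subseteq H$; and since $O(3)$ already contributes an orientation-reversing element lying in the second orthochronous component, $H$ meets both components and so $H=O^{\uparrow}(1,3)$, contradicting strictness.

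I expect the main obstacle to be this last paragraph, the boost-composition computation. The delicate points are verifying that the rapidity of the composite really does sweep the full interval $[0,2\zeta]$ (an intermediate-value argument in the angle $\theta$) and confirming that the accompanying Wigner rotation genuinely lies in $SO(3)$, so that it can be removed using elements already known to be in $H$. This is precisely what makes the argument valid for an arbitrary, not necessarily closed, subgroup $H$: at no point do we assume $H$ is a Lie subgroup, only that it is a subgroup containing $O(3)$ and one boost. Conceptually the same maximality is reflected in the fact that the boost generators form an irreducible $SO(3)$-representation inside $\mathfrak{so}(1,3)$, but that observation settles only the connected case and must be supplemented by the explicit generation argument above.
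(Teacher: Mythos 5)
Your argument is correct, but it takes a genuinely different route from the paper: the paper does not prove Theorem~\ref{Borisov} internally at all --- it quotes it as a special case of \cite[Cor.4.9]{OnBorisov}, which in turn rests on Borisov's theorem \cite{Borisov1986}, the symbol $\Box$ marking the absence of a proof. Your self-contained replacement is sound in both of its moves. The reduction modulo translations is the standard correspondence theorem: every group between $\Triv^{\uparrow}$ and $\Poi^{\uparrow}$ contains the translation group, which is the kernel of the linear-part homomorphism $\pi$, so intermediate groups correspond order-isomorphically (strict inclusions included) to subgroups between $\pi(\Triv^{\uparrow})\cong O(3)$ --- your identification of the linear parts as $\mathrm{diag}(1,R)$, using orthochronicity to exclude $L(\egyseg\,)=-\egyseg$, is right --- and $\pi(\Poi^{\uparrow})=O^{\uparrow}(1,3)$. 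The maximality step also holds up: the Cartan decomposition extracts a nontrivial boost from any $g\in H\setminus O(3)$; the identity $\cosh\zeta'=\cosh^{2}\zeta+\sinh^{2}\zeta\cos\theta$ is monotone in $\cos\theta$, so the composite rapidity sweeps exactly $[0,2\zeta]$; the Wigner rotation is the $O(4)$-part of the polar decomposition of a proper orthochronous Lorentz map, hence of the form $\mathrm{diag}(1,R)$ with $R\in SO(3)$, and can be cancelled inside $H$; additivity of collinear rapidities, polar decomposition, and the parity element of $O(3)$ then force $H=O^{\uparrow}(1,3)$. Crucially --- and you flag this correctly --- nothing in the generation argument assumes $H$ is closed or a Lie subgroup, which is precisely what the theorem demands, since $\group$ is an arbitrary group of functions; a purely infinitesimal (irreducibility-of-the-boost-representation) argument would indeed settle only the connected/closed case. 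As for what each approach buys: the paper's citation keeps the exposition short and inherits the generality of Borisov's classification of intermediate groups, whereas your proof makes the paper's key group-theoretic input elementary and self-contained, at the cost of writing out in full the rapidity formula and the two polar-decomposition facts that your sketch currently invokes.
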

\begin{theorem}${}$
\label{Borisov2}
There is no set of functions $\group$ which forms
a group under composition satisfying
 $\Scal\circ\Triv\subset \group\subset\Scal\circ\Poi$.
\end{theorem}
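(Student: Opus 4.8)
The plan is to reduce to Borisov's theorem (Theorem~\ref{Borisov}) by cutting everything down to orthochronous maps. Suppose for contradiction that $\group$ is a group with $\Scal\circ\Triv\subset\group\subset\Scal\circ\Poi$. Two facts will be used throughout: since the identity is a trivial transformation, every scaling $D=D\circ\mathrm{id}$ lies in $\Scal\circ\Triv\subseteq\group$, so $\Scal\subseteq\group$; and by Proposition~\ref{prop-triv} we have $\Triv\subseteq\Poi$ and $\Triv^{\uparrow}\subseteq\Poi^{\uparrow}$. Now put $H\defeq\group\cap\Poi$ and $H^{\uparrow}\defeq\group\cap\Poi^{\uparrow}$, both groups as intersections of groups. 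The upper bounds $H\subseteq\Poi$ and $H^{\uparrow}\subseteq\Poi^{\uparrow}$ are immediate, while $\Triv\subseteq\Scal\circ\Triv\subseteq\group$ together with $\Triv\subseteq\Poi$ gives $\Triv\subseteq H$, and intersecting with $\Poi^{\uparrow}$ gives $\Triv^{\uparrow}=\Triv\cap\Poi^{\uparrow}\subseteq H^{\uparrow}$. Thus $\Triv^{\uparrow}\subseteq H^{\uparrow}\subseteq\Poi^{\uparrow}$, and Theorem~\ref{Borisov} forces $H^{\uparrow}=\Poi^{\uparrow}$ or $H^{\uparrow}=\Triv^{\uparrow}$. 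I would then show that each alternative collapses $\group$ onto an endpoint.

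Suppose first $H^{\uparrow}=\Poi^{\uparrow}$, so $\Poi^{\uparrow}\subseteq\group$. The map $-\mathrm{id}$ is a Lorentz transformation, hence lies in $\Poi$, and it is antichronous; it is also a scaling, so $-\mathrm{id}\in\Scal\subseteq\group$. Since every element of $\Poi$ is orthochronous or antichronous and $-\mathrm{id}$ is a specific antichronous element, $\Poi=\Poi^{\uparrow}\cup(-\mathrm{id})\circ\Poi^{\uparrow}$; as $\group$ is closed under composition this gives $\Poi\subseteq\group$. Together with $\Scal\subseteq\group$ we obtain $\Scal\circ\Poi\subseteq\group$, i.e.\ $\group=\Scal\circ\Poi$, contradicting $\group\subset\Scal\circ\Poi$.

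Suppose instead $H^{\uparrow}=\Triv^{\uparrow}$. I would first strengthen this to $H=\Triv$. Let $R$ denote the time-reversal $(t,x,y,z)\mapsto(-t,x,y,z)$, which is a trivial transformation and is antichronous. For any antichronous $h\in H$, the composite $R\circ h$ is orthochronous and lies in $H$, hence in $H^{\uparrow}=\Triv^{\uparrow}$, so $h\in R\circ\Triv^{\uparrow}\subseteq\Triv$; combined with $H^{\uparrow}=\Triv^{\uparrow}\subseteq\Triv$ this yields $H\subseteq\Triv$, and so $H=\Triv$. Finally, given any $g\in\group\subseteq\Scal\circ\Poi$, write $g=D\circ P$ with $D\in\Scal$ and $P\in\Poi$; then $D^{-1}\in\Scal\subseteq\group$ gives $P=D^{-1}\circ g\in\group\cap\Poi=\Triv$, so $g=D\circ P\in\Scal\circ\Triv$. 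Hence $\group\subseteq\Scal\circ\Triv$, i.e.\ $\group=\Scal\circ\Triv$, contradicting $\Scal\circ\Triv\subset\group$. As both alternatives are impossible, no such $\group$ exists.

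I expect the main obstacle to be the second case. Extracting $H=\Triv$ from knowing only its orthochronous part requires the time-reversal trick to pin down the antichronous elements, and then the ``division by a scaling'' step $P=D^{-1}\circ g$ is what transfers control from $H=\group\cap\Poi$ back to all of $\group$. The first case and the various inclusions are routine once the reduction $\Triv^{\uparrow}\subseteq H^{\uparrow}\subseteq\Poi^{\uparrow}$ has been arranged so that Theorem~\ref{Borisov} can be invoked.
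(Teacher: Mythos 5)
Your proof is correct, and although it rests on the same core reduction as the paper's proof --- intersect $\group$ with $\Poi^{\uparrow}$ and invoke Theorem~\ref{Borisov} --- the logical organization is genuinely different. The paper first establishes two standalone identities, $(\Scal\circ\Triv)\cap\Poi^{\uparrow}=\Triv^{\uparrow}$ and $\Scal\circ\Poi=\Scal\circ\Poi^{\uparrow}$, then picks a witness $g\in\group\setminus(\Scal\circ\Triv)$ whose orthochronous Poincar\'e factor makes the lower inclusion $\Triv^{\uparrow}\subset\group\cap\Poi^{\uparrow}$ \emph{strict}, so that Borisov's theorem forces $\group\cap\Poi^{\uparrow}=\Poi^{\uparrow}$ in a single stroke; indeed the paper proves the slightly stronger claim that any group with $\Scal\circ\Triv\subset\group\subseteq\Scal\circ\Poi$ must equal $\Scal\circ\Poi$. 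You instead apply Borisov's theorem to the non-strict sandwich $\Triv^{\uparrow}\subseteq\group\cap\Poi^{\uparrow}\subseteq\Poi^{\uparrow}$ as a dichotomy and collapse each endpoint separately. Your top case re-proves the content of $\Scal\circ\Poi=\Scal\circ\Poi^{\uparrow}$ inline (your $-\mathrm{id}$ is exactly the paper's scaling $s_{-1}$). Your bottom case is where the routes truly diverge: using the time reversal $R\in\Triv$ to upgrade $\group\cap\Poi^{\uparrow}=\Triv^{\uparrow}$ to $\group\cap\Poi=\Triv$, and then dividing off scalings to conclude $\group\subseteq\Scal\circ\Triv$, has no counterpart in the paper, which uses its identity $(\Scal\circ\Triv)\cap\Poi^{\uparrow}=\Triv^{\uparrow}$ in the opposite direction (to certify that the witness's factor $p^{\uparrow}$ avoids $\Triv^{\uparrow}$). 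What each buys: the paper's lemma-first arrangement yields a one-case argument and reusable decomposition facts, while your dichotomy is more symmetric and avoids selecting a witness, at the cost of a second case. Two steps you assert without proof are routine and at the paper's own level of detail, but worth noting: every Poincar\'e map is orthochronous or antichronous because for $P=\tau\circ L$ the time displacement $P(1,0,0,0)_t-P(0,0,0,0)_t$ equals $L(1,0,0,0)_t$, whose square is at least $1$ by preservation of squared Minkowski length (this nonvanishing is also what your case split over $H$ and the identity $\Poi=\Poi^{\uparrow}\cup(-\mathrm{id})\circ\Poi^{\uparrow}$ depend on); and $\group\cap\Poi^{\uparrow}$ is a group because all maps involved are bijections of $\realnumb^4$, so group inverses coincide with functional inverses.
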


\begin{proof}
First we show that
\begin{equation}\label{new1}
  (\Scal\circ\Triv)\cap\Poi^{\uparrow}=\Triv^{\uparrow}.
\end{equation}
This can be proven as follows. We have
$(\Scal\circ\Triv)\cap\Poi^{\uparrow}\supseteq\Triv^{\uparrow}$ since
$\Poi^{\uparrow}\supseteq \Triv^{\uparrow}$ by
Proposition~\ref{prop-triv} and $\Scal\circ\Triv\supseteq
\Triv^{\uparrow}$ as the identity map is a scaling. To prove the
reverse inclusion, it is enough to show
$(\Scal\circ\Triv)\cap\Poi\subseteq\Triv$. Let $f\in
(\Scal\circ\Triv)\cap\Poi$, and let $s\in \Scal$ and $t\in\Triv$ be
such that $f=s\circ t$. Since $\Scal\circ\Triv$, $\Triv$ and $\Poi$
are all groups containing translations, we can assume without loss of
generality that $f$ and $t$ are linear. Then $t$ maps $(1,0,0,0)$ to
either $(1,0,0,0)$ or $(-1,0,0,0)$, and the scaling factor of $s$ is
either $1$ or $-1$ because otherwise $s\circ t$ would not preserve
squared Minkowski length, and hence $f=s\circ t$ would not be a
Poincar\'e transformation. So $s\in\Triv$, and hence $f=s\circ
t\in\Triv$ because $\Triv$ is closed under composition.

We will also use the following statement:
\begin{equation}  \label{new2}
  \Scal\circ\Poi=\Scal\circ\Poi^{\uparrow}.
\end{equation}
To show \eqref{new2}, it is enough to show that
$\Scal\circ\Poi\subseteq \Scal\circ\Poi^{\uparrow}$ since
$\Poi\supseteq \Poi^{\uparrow}$ holds by the definition. So let
$f\in\Scal\circ\Poi$. Then $f=s\circ p$ for some $s\in\Scal$ and
$p\in\Poi$. Without loss of generality, we can assume that
$p(0,0,0,0)=(0,0,0,0)$ since translations are orthochronous Poincar\'e
transformations and $\Poi$ and $\Poi^\uparrow$ are groups under
composition. Let $s_{-1}$ denote the scaling by factor $-1$.  Then
either $p\in\Poi^\uparrow$ or $s_{-1}\circ p\in \Poi^\uparrow$ because
$p(1,0,0,0)_t$ is either positive or negative and $s_{-1}$ changes the
sign of the time-components of non-horizontal vectors. In case
$p\in\Poi^\uparrow$, we are done. In the other case, $f=(s\circ
s_{-1})\circ (s_{-1}\circ p)$ is a desired decomposition of $f$ since
$s\circ s_{-1}\in \Scal$ and $s_{-1}\circ p\in \Poi^\uparrow$.

Finally, to prove the main claim, let us assume that $\group$ forms a
group and that
\begin{equation}
\label{Be3}
\Scal\circ\Triv\subset\group\subseteq\Scal\circ\Poi.
\end{equation}
We will prove that $\group=\Scal\circ\Poi$. First let us note that by
\eqref{Be3} and the fact that $\Triv$ contains the identity function
on $\realnumb^4$, we have that $\Scal\subseteq\Scal\circ
\Triv\subset\group$.  Let $g\in \group$ such that
$g\not\in\Scal\circ\Triv$. By \eqref{Be3} and \eqref{new1}, this $g$
can be decomposed as $g=s\circ p^\uparrow$ for some $s\in\Scal$ and
$p^{\uparrow}\in\Poi^{\uparrow}$, \ie $s\circ p^{\uparrow}\in\group$
but $s\circ p^{\uparrow}\not\in\Scal\circ\Triv$.  Clearly,
$s\in\group$ by $s\in\Scal\subseteq\group$.  As $\group$ forms a group
and both $s$ and $s\circ p^{\uparrow}$ are in $\group$, we have that
$p^{\uparrow}\in\group$. On the other hand,
$p^{\uparrow}\not\in\Scal\circ\Triv$ because we would otherwise have
$s\circ p^{\uparrow}\in\Scal\circ(\Scal\circ\Triv) = \Scal\circ\Triv$
because $s\in\Scal$ and $\Scal$ is closed under composition; and we
have already seen that this is not the case.  Thus
$p^{\uparrow}\in\group\cap\Poi^{\uparrow}$ and $p^{\uparrow}\not\in
(\Scal\circ\Triv)\cap\Poi^{\uparrow}$.  By this and \eqref{Be3}, we
get that
\begin{equation}\label{eqx}
  (\Scal\circ\Triv)\cap\Poi^\uparrow\;
\subset\;\group\cap\Poi^{\uparrow}\;\subseteq\;
(\Scal\circ\Poi)\cap\Poi^{\uparrow}.
\end{equation}
As $\Poi^{\uparrow}\subseteq\Poi\subseteq\Scal\circ\Poi$, we have that
$(\Scal\circ\Poi)\cap\Poi^{\uparrow}=\Poi^{\uparrow}$.  By this,
\eqref{new1} and \eqref{eqx}, we have that
$\Triv^{\uparrow}\;\subset\;\group\cap
\Poi^{\uparrow}\subseteq\Poi^{\uparrow}$.  Now applying
Theorem~\ref{Borisov}, we get that
$\group\cap\Poi^{\uparrow}=\Poi^{\uparrow}$, and so
$\Poi^{\uparrow}\subseteq\group$. 

We also have that
$\Scal\subseteq\group$. Therefore,
$\Scal\circ\Poi^{\uparrow}\subseteq\group$.  As
$\Scal\circ\Poi^{\uparrow}=\Scal\circ\Poi$ by \eqref{new2}, we get
that $\Scal\circ\Poi\subseteq\group$. The reverse inclusion holds by assumption \eqref{Be3}.

So $\group=\Scal\circ\Poi$ as
stated.
\end{proof}

The following is a corollary of Theorems~\ref{spacetime-groups-a},
\ref{spacetime-groups-b} and \ref{Borisov2}:
\begin{corollary}
\label{nincskozte}
There is no set of functions $\group$ that forms a group under
composition, satisfying
$\aut{\LclST}\subset\group\subset\aut{\RelST}$.
\end{corollary}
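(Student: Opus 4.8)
The plan is to derive this as an immediate consequence of the group-level non-existence result already established in Theorem~\ref{Borisov2}, by translating the automorphism groups of the two spacetimes into the explicit transformation groups computed earlier. First I would invoke Theorem~\ref{spacetime-groups-a} to rewrite $\aut{\RelST}$ as $\Scal\circ\Poi$, and Theorem~\ref{spacetime-groups-b} to rewrite $\aut{\LclST}$ as $\Scal\circ\Triv$.

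Next, arguing by contradiction, I would suppose that some set of functions $\group$ forms a group under composition and satisfies $\aut{\LclST}\subset\group\subset\aut{\RelST}$. Substituting the two identifications above, this hypothesis becomes exactly $\Scal\circ\Triv\subset\group\subset\Scal\circ\Poi$, with $\group$ still a group under composition. But Theorem~\ref{Borisov2} asserts precisely that no such group exists, giving the required contradiction.

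There is essentially no obstacle at this stage, since all the substantive content has already been packaged into the preceding results. The genuinely hard part lies upstream: in computing the two automorphism groups (Theorems~\ref{spacetime-groups-a} and \ref{spacetime-groups-b}) and, above all, in establishing the group-theoretic gap result of Theorem~\ref{Borisov2}, which itself rests on the corollary of Borisov's theorem (Theorem~\ref{Borisov}). Once those are in hand, the corollary is purely a matter of rewriting the automorphism groups in their explicit form and reading off the contradiction.
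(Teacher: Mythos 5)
Your proposal is correct and matches the paper's own reasoning exactly: the corollary is stated there as an immediate consequence of Theorems~\ref{spacetime-groups-a}, \ref{spacetime-groups-b} and \ref{Borisov2}, obtained by rewriting $\aut{\RelST}=\Scal\circ\Poi$ and $\aut{\LclST}=\Scal\circ\Triv$ and applying the group-gap result. You also correctly identify that all the substantive work lives upstream in those three results.
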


\section{Why is $\LclST$ late classical spacetime?}
\label{sec-lcs}

Here we justify the name late classical spacetime by showing that
$\LclST$ is definitionally equivalent to Galilean spacetime extended
with lightlike relatedness $\llcon$. To do so, let us define the 4-ary
relations on $\realnumb^4$ of \emph{spatial congruence}
$\cng{\abstime}$ and \emph{temporal congruence} $\cng{\mathsf{T}}$ as:
\begin{align*}
(\vec{p},\vec{q}\,)\cngrel{\abstime} (\vec{r},\vec{s}\,) \quad \defiff\quad &
  \ \vec{p}\,\abstime\,\vec{q},\ \ \vec{r}\,\abstime\,\vec{s}
  \ \text{ and }\
  (\vec{p},\vec{q}\,)\cngrel{}(\vec{r},\vec{s}\,),\\ (\vec{p},\vec{q}\,)\cngrel{\mathsf{T}}
  (\vec{r},\vec{s}\,) \quad \defiff\quad &
  \vec{p}_t-\vec{q}_t=\vec{r}_t-\vec{s}_t,
\end{align*}
and introduce Galilean spacetime as:
\[
\GalST\defeq\la \realnumb^4,\cng{\abstime},\cng{\mathsf{T}},
\Col\ra.
\]
We note that temporal congruence $\cng{\mathsf{T}}$ is definable in
terms of $\abstime$ and $\Col$; and this implies that temporal
congruence $\cng{\mathsf{T}}$ is definable in $\la
\realnumb^4,\cng{\abstime},\Col\ra$ because $\abstime$ is definable in
terms of $\cng{\abstime}$. For an axiomatic approach to Galilean spacetime using similar languages, see \eg \cite{Field80} and \cite{ketland23}.

\begin{prop}
$\la\GalST,\llcon\ra\defequiv\LclST$.
\end{prop}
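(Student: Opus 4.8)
The plan is to prove the definitional equivalence directly, by establishing the two inclusions $\ca{\LclST}\subseteq\ca{\la\GalST,\llcon\ra}$ and $\ca{\la\GalST,\llcon\ra}\subseteq\ca{\LclST}$. Since both models share the universe $\realnumb^4$, it suffices to show that each generating relation of one model is definable in the other: concepts are closed under definability (one substitutes the defining formulas for the relation symbols), so definability of all generators of $\Model$ in $\Nodel$ yields $\ca{\Model}\subseteq\ca{\Nodel}$. This route avoids computing automorphism groups; as an alternative one could instead verify that $\la\GalST,\llcon\ra$ is an FFD coordinate geometry and deduce the claim from $\aut{\la\GalST,\llcon\ra}=\aut{\LclST}$ via the second clause of Theorem~\ref{thm-erlangen}, but the direct approach is more economical here.

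For the inclusion $\ca{\LclST}\subseteq\ca{\la\GalST,\llcon\ra}$, I would check that the two generators $\abstime$ and $\llcon$ of $\LclST$ are definable in $\la\GalST,\llcon\ra$. The relation $\llcon$ is present by construction, and $\abstime$ is definable from spatial congruence by the reflexivity observation already noted before this proposition, namely $\vec p\,\abstime\,\vec q\iff(\vec p,\vec q)\cngrel{\abstime}(\vec p,\vec q)$, since the Euclidean-congruence clause in the definition of $\cng{\abstime}$ holds trivially when the two pairs coincide.

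For the reverse inclusion $\ca{\la\GalST,\llcon\ra}\subseteq\ca{\LclST}$, I would show that each of $\cng{\abstime}$, $\cng{\mathsf{T}}$, $\Col$ and $\llcon$ is a concept of $\LclST$. Again $\llcon$ is present by construction; $\Col$ is definable from $\llcon$ (the same fact used in Proposition~\ref{spacetimes-are-geometries}); and $\cng{\mathsf{T}}$ is definable from $\abstime$ and $\Col$ (as remarked after the definition of $\GalST$), so all three already lie in $\ca{\LclST}$. The only genuinely substantive case is spatial congruence $\cng{\abstime}$.

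The hard part is therefore $\cng{\abstime}$, and the key input is that Euclidean congruence $\cng{}$ is itself a concept of $\LclST$. This is exactly what Proposition~\ref{connection}\eqref{c2} supplies, giving $\LclST\defequiv\la\euclg,\rest\ra=\la\realnumb^4,\cng{},\Bw,\rest\ra$ and hence $\cng{}\in\ca{\LclST}$. Once $\cng{}$ and $\abstime$ are available as concepts of $\LclST$, spatial congruence is immediate from its definition, since $(\vec p,\vec q)\cngrel{\abstime}(\vec r,\vec s)$ holds iff $\vec p\,\abstime\,\vec q$, $\vec r\,\abstime\,\vec s$ and $(\vec p,\vec q)\cngrel{}(\vec r,\vec s)$, a conjunction of relations already in $\ca{\LclST}$. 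Combining the two inclusions yields $\ca{\la\GalST,\llcon\ra}=\ca{\LclST}$, i.e.\ $\la\GalST,\llcon\ra\defequiv\LclST$, as required.
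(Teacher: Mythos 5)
Your proof is correct, and on the harder inclusion it takes a genuinely different route from the paper. The first inclusion, $\ca{\LclST}\subseteq\ca{\la\GalST,\llcon\ra}$ via the reflexivity trick $\vec p\,\abstimerel\,\vec q\iff(\vec p,\vec q)\cngrel{\abstime}(\vec p,\vec q)$, is exactly the paper's argument. For the reverse inclusion, however, the paper does not define the generators of $\la\GalST,\llcon\ra$ in $\LclST$ at all: it checks that every map in $\Triv$ and $\Scal$ respects $\cng{\abstime}$, $\cng{\mathsf{T}}$, $\Col$ and $\llcon$, concludes $\Scal\circ\Triv\subseteq\aut{\la\GalST,\llcon\ra}$, invokes Theorem~\ref{spacetime-groups-b} to get $\aut{\LclST}\subseteq\aut{\la\GalST,\llcon\ra}$, and then applies Theorem~\ref{thm-erlangen} --- which requires the additional observation (which the paper does make, and which your proposal explicitly avoids) that $\la\GalST,\llcon\ra$ is an FFD coordinate geometry. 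You instead argue syntactically, generator by generator, with the only substantive step being $\cng{}\in\ca{\LclST}$, extracted from Proposition~\ref{connection}\eqref{c2}; this is legitimate, and it is worth noting that the paper's route depends on the same proposition anyway, since the proof of Theorem~\ref{spacetime-groups-b} rests on it. What your approach buys is economy: no Erlangen machinery, no FFD verification, and no invariance checks for this proposition. What it costs is a heavier reliance on explicit definability facts: you need $\Col$ definable from $\llcon$ (the paper's citation of Pambuccian in Proposition~\ref{spacetimes-are-geometries} covers this) and the paper's unproved remark that $\cng{\mathsf{T}}$ is definable from $\abstime$ and $\Col$ --- a fact the paper asserts but whose cleanest justification is itself via Theorem~\ref{THM1}, i.e.\ the very invariance-based machinery you set aside. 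The paper's automorphism route handles all four relations uniformly by mechanical closure checks, never needing to exhibit a single defining formula, which is why it fits the methodology of the rest of the paper; your route is shorter given the results already on the table.
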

\begin{proof}
Recall that $\LclST=\la\realnumb^4,\abstime,\llcon\ra$, and note that $\abstime$ is definable in terms of
 $\cng{\abstime}$ as $\vec{p}\,\abstimerel\,\vec{q}\ \Leftrightarrow\
(\vec{p},\vec{q}\,)\cngrel{\abstime}(\vec{p},\vec{q}\,)$. Therefore,
\[
\ca{\la\GalST,\llcon\ra}\supseteq \ca{\LclST}.
\]

Next, note that $\Triv\subseteq\aut{\la\GalST,\llcon\ra}$ and
$\Scal\subseteq\aut{\la \GalST,\llcon\ra}$.  This follows easily
because $\la\GalST,\llcon\ra$ is $\la
\realnumb^4,\cng{\abstime},\cng{\mathsf{T}}, \Col,\llcon\ra$, and all
maps in $\Triv$ and $\Scal$ respect
$\cng{\abstime}$,$\cng{\mathsf{T}}$, $\Col$ and $\llcon$.  Therefore,
$(\Scal\circ\Triv)\subseteq\aut{\la \GalST,\llcon\ra}$.  But, by
Theorem~\ref{spacetime-groups-b}, $\aut{\LclST}=\Scal\circ\Triv$.
Consequently, $\aut{\LclST}\subseteq\aut{\la \GalST,\llcon\ra}$,
whence by Theorem~\ref{thm-erlangen} 
\[
\ca{\la\GalST,\llcon\ra} \subseteq \ca{\LclST}
\]
because $\la\GalST,\llcon\ra$ is
also an FFD coordinate geometry.
Thus $\ca{\LclST}=\ca{\la\GalST,\llcon\ra}$,
\ie $\LclST\defequiv\la\GalST,\llcon\ra$.
\end{proof}

\bibliographystyle{amsalpha}
\bibliography{LogRel12019}

\end{document}